\def\BibTeX{{\rm B\kern-.05em{\sc i\kern-.025em b}\kern-.08em
		T\kern-.1667em\lower.7ex\hbox{E}\kern-.125emX}}
\def\doi#1{\href{https://doi.org/\detokenize{#1}}{\url{https://doi.org/\detokenize{#1}}}}
\newif\ifFull
\definecolor {infocolor} {rgb} {0.6,0.6,0.6}
\newcommand{\Is}       {:=}
\newcommand{\set}[1]{\left\{ #1\right\}}
\newcommand{\gilt}{:}
\newcommand{\sodass}{\,:\,}
\newcommand{\setGilt}[2]{\left\{ #1\sodass #2\right\}}
\def\MdR{\ensuremath{\mathbb{R}}}
\newcommand{\Id}[1]{\texttt{\detokenize{#1}}}
\newcommand{\ie}{i.\,e.,\xspace}
\newcommand{\eg}{e.\,g.,\xspace}
\def\comment#1{}
\def\withcomments{
	\newcounter{mycommentcounter}
	\def\comment##1{\refstepcounter{mycommentcounter}%
		\ifhmode%
		\unskip%
		{\dimen1=\baselineskip \divide\dimen1 by 2 %
			\raise\dimen1\llap{\tiny\bfseries \textcolor{red}{-\themycommentcounter-}}}\fi%
		\marginpar[{\renewcommand{\baselinestretch}{0.8}%
			\hspace*{3em}\begin{minipage}{5em}\footnotesize [\themycommentcounter]: \raggedright ##1\end{minipage}}]{\renewcommand{\baselinestretch}{0.8}%
			\begin{minipage}{5em}\footnotesize [\themycommentcounter]: \raggedright ##1\end{minipage}}}
}
\definecolor{darkgreen}{RGB}{0,200,100}
\definecolor{orange}{RGB}{255,80,0}
\newcommand{\Xcomment}[1]{}
\begin{document}
\title{Recursive Multi-Section on the Fly:\\Shared-Memory Streaming Algorithms for Hierarchical~Graph~Partitioning~and~Process~Mapping\thanks{Partially supported by DFG grant SCHU 2567/1-2.}}

\author{Marcelo Fonseca Faraj\inst{1} \and
Christian Schulz\inst{2}}
\authorrunning{Faraj and Schulz}
\institute{Heidelberg University, Heidelberg, Germany, 
\email{marcelofaraj@informatik.uni-heidelberg.de} \and Heidelberg University, Heidelberg, Germany,
\email{christian.schulz@informatik.uni-heidelberg.de}
 }
\maketitle              %
\begin{abstract}
Partitioning a graph into balanced blocks such that few edges run between blocks is a key problem for large-scale distributed processing. A current trend for partitioning huge graphs are streaming algorithms, which use low computational resources. In this work, we present a shared-memory streaming multi-recursive partitioning scheme that performs recursive multi-sections on the fly without knowing the overall input graph. Our approach has a considerably lower running time complexity in comparison with state-of-the-art non-buffered one-pass partitioning algorithms. Moreover, if the topology of a distributed system is known, it is possible to further optimize the communication costs by mapping partitions onto processing elements. Our experiments indicate that our algorithm is both faster and produces better process mappings than competing tools.  In case of graph partitioning,  our framework is up to two orders of magnitude faster at the cost of~5\% more cut edges compared to Fennel.
\keywords{graphs partitioning  \and streaming \and process mapping \and shared-memory parallelization.}
\end{abstract}

\section{Introduction}
\label{sec:introduction}

Graphs are a universal tool used to represent complex phenomena such as dependencies in databases, communications in distributed algorithms, interactions in social networks, and so forth. %
With the ever-increasing amount of data, processing large-scale graphs on distributed systems and databases becomes a necessity for a wide range of applications.
When processing a graph in parallel, each processing element (PE) operates on some portion of it while distinct PEs communicate with each other through message-passing.
To make the parallel algorithm communication-efficient, graph partitioning is needed as a preprocessing step. 
When the topology of the distributed system is known, a process mapping preprocessing is even more effective, since it directly minimizes the total communication cost for the given topology.
Moreover, network topologies of modern parallel computers 
feature special properties that can be exploited by mapping algorithms. In particular, a common
property that is frequently exploited by process mapping algorithms~\cite{DBLP:conf/europar/PredariTSM21,fonseca2020better} is that PEs are hierarchically organized into, e. g., islands, racks, nodes, GPUs,
processors, cores with corresponding communication links of similar quality.
Partitioning and process mapping have proven to significantly speedup parallel applications~\cite{aktulga2012topology,bhatele2009dynamic} and hierarchical partitionings by itself are used in a wide-range of applications, for example for distributed hybrid CPU and GPU training of graph neural networks on billion-scale graphs~\cite{zheng2022distributed}.

There has been a wide range of research on graph partitioning and process mapping~\cite{GPOverviewBook,SPPGPOverviewPaper,DBLP:reference/bdt/0003S19}.
Graph partitioning and process mapping are NP-complete \cite{Garey1974} and there can be no approximation algorithm with a constant ratio for general graphs~\cite{BuiJ92}. 
Thus, heuristics are used in practice. %
The most prominent results in the area can be split in three groups of algorithms: distributed and parallel, in-memory, and streaming.
From these groups, only streaming algorithms can quickly partition huge graphs while using much less memory than the total size of the graph.
This is why streaming partitioning has become a major research trend in the last years~\cite{stanton2012streaming,tsourakakis2014fennel,nishimura2013restreaming,awadelkarim2020prioritized,jafari2021fast,HeiStream}.
The most popular streaming approach in literature is the one-pass model~\cite{DBLP:journals/pvldb/AbbasKCV18}, where nodes arrive one at a time including their neighborhood and then have to be assigned to blocks. 
At the moment there is range of streaming algorithms operating in that model that are either very fast and don't care for solution quality at all (such as Hashing~\cite{stanton2012streaming}), or algorithms that are still fast, but compute significantly better solutions than just random assignments (such as such Fennel~\cite{tsourakakis2014fennel}). 
Moreover, to the best of our knowledge no streaming algorithm has yet been proposed or adapted to optimize for process~mapping~objectives.

\emph{Contribution.} 
We develop a shared-memory parallel streaming partitioning algorithm that performs recursive multi-sections on the fly thereby enabling the computation of hierarchical partitionings using a single pass over the input.
If a hierarchy is not specified as an input, out approach can also be used as a tool to solve the standard graph partitioning problem. 
Our approach has a considerably lower running time complexity in comparison with state-of-the-art non-buffered one-pass partitioning algorithms that solve this problem.
Moreover, it opens a door to trade solution quality for execution speed by solving some of its partitioning subproblems with faster algorithms.
Our experiments show that on average, our algorithm is $134.4$ times faster than Fennel at the cost of $5\%$ more cut edges.
In scalability tests, our algorithm is only $3$ times slower than Hashing when running on 32 threads.

Moreover, if the hierarchical topology of a distributed system is known, we can adopt our scheme to perform multi-sectioning along the specified hierarchy. Thereby our algorithms obtains good process mappings using single pass through the input graph. %
On average, our algorithm computes $41\%$ better process mappings and is 55 times faster than Fennel which ignores the given hierarchy.

\section{Preliminaries}
\label{sec:preliminaries}

\subsection{Basic Concepts}
\label{subsec:basic_concepts}

Let $G=(V=\{0,\ldots, n-1\},E)$ be an \emph{undirected graph} with no multiple or self edges, such that $n = |V|$, $m = |E|$.
Let $c: V \to \MdR_{\geq 0}$ be a node-weight function, and let $\omega: E \to \MdR_{>0}$ be an edge-weight function.
We generalize $c$ and $\omega$ functions to sets, such that $c(V') = \sum_{v\in V'}c(v)$ and $\omega(E') = \sum_{e\in E'}\omega(e)$.
Let $N(v) = \setGilt{u}{\set{v,u}\in E}$ denote the neighbors of $v$.
A graph $S=(V', E')$ is said to be a \emph{subgraph} of $G=(V, E)$ if $V' \subseteq V$ and $E' \subseteq E \cap (V' \times V')$. 
When $E' = E \cap (V' \times V')$, $S$ is an \emph{induced} subgraph.
Let $d(v)$ be the degree of node $v$ and $\Delta$ be the maximum degree of $G$.%

The \emph{graph partitioning} problem~(GP) consists of assigning each node of $G$ to exactly one of $k$ distinct \emph{blocks} respecting a balancing constraint in order to minimize the weight of the edges running between the blocks, i.e. the (edge-cut).
More precisely, GPP partitions $V$ into $k$ blocks $V_1$,\ldots,$V_k$ (\ie $V_1\cup\cdots\cup V_k=V$ and $V_i\cap V_j=\emptyset$ for $i\neq j$), which is called a \emph{\mbox{$k$-partition}} of $G$.
The \emph{edge-cut} of a $k$-partition consists of the total weight of the edges crossing blocks, \ie $\sum_{i<j}\omega(E_{ij})$, in which $E_{ij} \Is $ $\big\{\set{u,v}\in E : u\in V_i,v\in V_j\big\}$.
The \emph{balancing constraint} demands that the sum of node weights in each block does not exceed a threshold associated with some allowed \emph{imbalance}~$\epsilon$.
More specifically, $\forall i~\in~\{1,\ldots,k\} \gilt$ $c(V_i)\leq L_{\max}\Is \big\lceil(1+\epsilon) \frac{c(V)}{k} \big\rceil$.
Throughout this paper, a hierarchical partition is defined by $\mathcal{S}=a_1: a_2: ...:a_\ell$, were the total number of blocks is given as $k = \Pi_{i=1}^{\ell}a_i$ and $\mathcal{S}$ encodes that the partition is obtained by first partitioning the input into $a_\ell$ blocks, then each block is partitioned into $a_{\ell-1}$ blocks and so forth.

For process mapping applications of hierarchical partitions, assume that we have $n$ processes and a topology containing~$k$ PEs.
Let $\mathcal{C}\in \MdR^{n \times n}$ denote the communication matrix and let $\mathcal{D}\in \MdR^{k \times k}$ denote the (implicit) topology matrix or distance matrix.
In particular, $\mathcal{C}_{i,j}$ represents the required amount of communication between processes $i$ and $j$, while $\mathcal{D}_{x,y}$ represents the cost of each communication between PEs $x$ and $y$.
Hence, if processes $i$ and $j$ are respectively assigned to PEs $x$ and $y$, or vice-versa, the communication cost between $i$ and $j$ will be $\mathcal{C}_{i,j}\mathcal{D}_{x,y}$.
Throughout this work, we assume that $\mathcal{C}$ and $\mathcal{D}$ are symmetric -- otherwise one can create equivalent problems with symmetric inputs \cite{brandfass2013rank}.

In particular, for process mapping applications tackled in this paper, we assume that topologies are organized as homogeneous hierarchies. 
In this case $\mathcal{S}=a_1: a_2: ...:a_\ell$ is a
sequence describing the hierarchy of a supercomputer. The sequence
should be interpreted as each processor having $a_1$ cores, each node
$a_2$ processors, each rack $a_3$ nodes, and so~forth,
such that the total number of PEs is $k=\Pi_{i=1}^{\ell}a_i$.
Without loss of generality, we assume that $a_i \geq 2, \forall i \in \{1, \ldots, \ell\}$.
Let $D = d_1:d_2:\ldots:d_\ell$ be a sequence describing the distance between PEs within each hierarchy level, meaning that the distance between two cores in the same processor is $d_1$, the distance between two cores in the same node but in different processors is $d_2$, the distance between two cores in the same rack but in different nodes is $d_3$, and~so~forth.

Throughout the paper, we assume that the input communication matrix is already given as a graph $G_\mathcal{C}$, \ie no conversion of the matrix into a graph is necessary.
More precisely, the graph representation is defined as $G_\mathcal{C}:=(\{1,\ldots, n\}, E[\mathcal{C}])$ where $E[\mathcal{C}] :=\{(u,v) \mid \mathcal{C}_{u,v} \not = 0\}$.
In other words, $E[\mathcal{C}]$ is the edge set of the processes that need to communicate with each other. 
Note that the set contains forward and backward edges, and that the weight of each edge in the graph equals the corresponding entry in the~communication~matrix~$\mathcal{C}$. 

The \emph{process mapping} problem consists of assigning the nodes of a communication graph to PEs in a communication topology while respecting a balancing constraint (the same used for graph partitioning above) in order to minimize the total communication cost.
Let \mbox{$\Pi: \{1, \ldots, n\} \mapsto \{1, \ldots, k\}$} be the function that maps a node onto its PE.
The  objective of process mapping is to minimize $J(\mathcal{C},\mathcal{D}, \Pi) := \sum_{i,j} \mathcal{C}_{i, j}\mathcal{D}_{\Pi(i),\Pi(j)}$.
Within the scope of this work, the number of nodes (processes)~$n$ in the communication graph is much larger than the number of PEs~$k$, which matches most real-world situations.

\paragraph*{Streaming Model.}
\label{subsec:Computational Model}
Streaming algorithms usually follow a load-compute-store logic.
Our focus in this paper and the most used streaming model is the \emph{one-pass} model.
In this model, nodes are loaded one at a time alongside with their adjacency lists, then some logic is applied to permanently assign them to blocks.
This logic can be as simple as a Hashing function or as complex as scoring all blocks based on some objective and then assigning the node to the block with highest score.
There are other streaming models such as the sliding window~\cite{patwary2019window} and the buffered streaming~\cite{jafari2021fast,HeiStream}.
More details are given in Section~\ref{subsec:related_work}.

\vspace*{-.2cm}
\subsection{Related Work}
\label{subsec:related_work}

There is a huge body of research on graph partitioning.
The most prominent tools to partition (hyper)graphs in memory include Metis~\cite{karypis1996parallel}, Scotch~\cite{Pellegrini96experimentalanalysis}, KaHIP~\cite{kabapeE}, KaMinPar~\cite{gottesburen2021deep}, KaHyPar~\cite{schlag2016k}, Mt-KaHyPar~\cite{mt-kahypar-d}, and mt-KaHIP~\cite{DBLP:conf/europar/Akhremtsev0018}.
The readers are referred to~\cite{GPOverviewBook,SPPGPOverviewPaper,DBLP:reference/bdt/0003S19,walshaw2000mpm,karypis1998fast,Walshaw07,kaffpa} for extensive material and references.
Here, we focus on the results specifically related to the scope of this paper.

\paragraph*{Process Mapping.}
Müller-Merbach~\cite{muller2013optimale} proposed a greedy construction method to perform the one-to-one mapping of blocks onto PEs.
This method was later improved by Glantz~et~al.~\cite{glantzMapping2015} with an algorithm called \emph{GreedyAllC}. 
Heider~\cite{heider1972computationally} proposed a local search method to refine an already given mapping. %
To reduce the runtime, Brandfass~et~al.~\cite{brandfass2013rank} introduced a couple of modifications to speed up this local search, such as only considering pairs of PEs that can reduce the objective or partitioning the search space into some consecutive blocks and only performing swaps inside those blocks.
Glantz~et~al.~\cite{DBLP:conf/icpp/GlantzPM18} proposed a one-to-one mapping algorithm in which the hardware topology is an isometric subgraph of a hypercube and labeled the nodes and the PEs with bit-strings in order to optimize the algorithm locality.
Schulz~and~Träff~\cite{schulz2017better}, proposed a top-down multi-section approach to map blocks to PEs when the communication topology is a regular hierarchy. 
Then, the authors solved the process mapping problem in two steps by first partitioning the graph with KaHIP~\cite{kaffpa}, then applying its top-down algorithm followed by a local search similar to that from~\cite{brandfass2013rank}.
Subsequently, Kirchbach~et~al.~\cite{GlobalMultisection} modified the partitioning step to apply the KaHIP algorithm in a \emph{recursive multi-section} itself, which improved the solution quality even further.
We give more details about this type of algorithms in Section~\ref{sec:OnlineRecursiveMulti-Section}.
Faraj~et~al.~\cite{fonseca2020better} proposed an algorithm which integrates graph partitioning and process mapping, which considerably improved the mapping quality while also reducing the running time.
There are also specialized algorithms for process mapping, such as~\cite{DBLP:conf/cluster/KirchbachLH0T20}, which assumes a Cartesian communication topology.
There are publicly available tools which solve the process mapping problem, such as Scotch~\cite{Pellegrini96experimentalanalysis} and KaHIP~\cite{schulz2017better,GlobalMultisection}.
Moreover, Predari~et.al~\cite{DBLP:conf/europar/PredariTSM21}~released a distributed algorithm to solve the process mapping problem.
To the best of our knowledge, no work has yet solved process mapping using a streaming model.

\paragraph*{Streaming Partitioning.}
Stanton and Kliot~\cite{stanton2012streaming} introduced graph partitioning in the streaming model and proposed some heuristics to solve it.
Their most prominent heuristic include the one-pass methods \emph{Hashing} and \emph{linear deterministic greedy}~(LDG).
In their experiments, LDG had the best overall edge-cut.
In this algorithm, node assignments prioritize blocks containing more neighbors and use a penalty multiplier to control imbalance. 
Particularly, a node $v$ is assigned to the block $V_i$ that maximizes $|V_i \cap N(v)|*\lambda(i)$ with $\lambda(i)$ being a multiplicative penalty defined as $(1-\frac{|V_i|}{L_\text{max}})$. 
The intuition is that the penalty avoids to overload blocks that are already very heavy.
In case of ties on the objective function, LDG moves the node to the block with fewer nodes.
Overall, LDG partitions a graph in $O(m+nk)$ time.
On the other hand, Hashing has running time $O(n)$ but produces a poor edge-cut.

Tsourakakis~et~al.~\cite{tsourakakis2014fennel} proposed \emph{Fennel},  a one-pass partitioning heuristic based on the widely-known clustering objective \emph{modularity}~\cite{brandes2007modularity}.
Fennel assigns a node $v$ to a block $V_i$, respecting a balancing threshold, in order to maximize an expression of type $|V_i\cap N(v)|-f(|V_i|)$, \ie with an additive penalty.
This expression is an interpolation of two properties: attraction to blocks with many neighbors and repulsion from blocks with many non-neighbors.
When $f(|V_i|)$ is a constant, the expression coincides with the first property.
If $f(|V_i|) = |V_i|$, the expression coincides with the second property.
In particular, the authors defined the Fennel objective with $f(|V_i|) = \alpha * \gamma * |V_i|^{\gamma-1}$, in which~$\gamma$ is a free parameter and $\alpha = m \frac{k^{\gamma-1}}{n^{\gamma}}$.
After a parameter tuning made by the authors, Fennel uses $\gamma=\frac{3}{2}$, which provides $\alpha=\sqrt{k}\frac{m}{n^{3/2}}$.
As LDG, Fennel partitions a graph in $O(m+nk)$ time.

Restreaming graph partitioning has been introduced by Nishimura~and~Ugander~\cite{nishimura2013restreaming}.
In this model, multiple passes through the entire input are allowed, which enables iterative improvements.
The authors proposed easily implementable restreaming versions of LDG and Fennel: ReLDG and ReFennel, respectively. 
Awadelkarim and Ugander~\cite{awadelkarim2020prioritized} studied the effect of node ordering for streaming graph partitioning.
The authors introduced the notion of \emph{prioritized streaming}, in which (re)streamed nodes are statically or dynamically reordered based on some priority. 
The authors proposed a prioritized version of ReLDG, which uses multiplicative weights of restreaming algorithms and adapts the ordering of the streaming process inspired by balanced label propagation.  
Their experimental results consider a range of stream orders, where a dynamic ordering based on their own metric \emph{ambivalence} is the best regarding edge-cut, with a static ordering based on degree being nearly as good.

Besides the one-pass model, other streaming models have also been used to solve graph partitioning.
Patwary et al.~\cite{patwary2019window} proposed WStream, a greedy stream algorithm that keeps a sliding stream window.
Jafari et al.~\cite{jafari2021fast} proposed a shared-memory multilevel algorithm based on a buffered streaming model.
Their algorithm uses the one-pass algorithm LDG as the coarsening, initial partitioning, and the local search steps of their multilevel scheme.
Faraj~and~Schulz~\cite{HeiStream} proposed HeiStream, a high-quality multilevel algorithm also based on a buffered streaming model.
Their algorithm loads a chunk of nodes, builds a model, and then partitions this model with a traditional multilevel algorithm coupled with an extended version of the Fennel objective.
Regarding high-quality streaming partitioning, HeiStream is the only algorithm whose complexity is better than $O(m+nk)$.
In particular, it is $O(m+n)$, but it is slower than LDG in practice for reasonable values of $k$ due to larger constant factors~\cite{HeiStream}.
However, note that none of the streaming tools above solves the process mapping problem.

\vspace*{-.2cm}
\section{Online Recursive Multi-Section}
\label{sec:Online Recursive Multi-Section}
\label{sec:OnlineRecursiveMulti-Section}

In this section, we describe our main algorithmic contribution.
We describe the overall scheme and show how it can be used to compute recursive multi-sections on the fly along a specified \emph{hierarchy}. Moreover, we will also show how this applies to the process mapping application. Lastly, we show how to adapt the algorithm to solve standard the \emph{graph partitioning} problem.

\vspace*{-.2cm}
\subsection{Overall Scheme}
\label{subsec:Overall Scheme Map}

A successful offline algorithm to partition and map the nodes of a graph onto PEs is the recursive multi-section~\cite{schulz2017better,GlobalMultisection,fonseca2020better}.
This approach specializes the partitioning process for the case in which the communication cost between two processes (nodes) highly depends on the hierarchy level shared by the PEs (blocks) in which they are allocated.
Recall that a hierarchy is represented by a string $\mathcal{S}=a_1: a_2: ...:a_\ell$ which in the process mapping application means that each processor has $a_1$ cores, each node has
$a_2$ processors, each rack has $a_3$ nodes, and so forth.
The offline recursive multi-section works as follows.
First, the whole graph is partitioned in $a_\ell$ blocks. 
Then, the subgraph induced by each block of an $a_i$-partition is recursively partitioned in $a_{i-1}$ sub-blocks until the whole graph is partitioned in $k=\prod_{i=1}^{\ell}{a_i}$ blocks.
This recursive approach have been developed for process mapping and exploits the fact that the communication between PEs is cheaper through lower layers of the communication hierarchy.
It creates a hierarchy of partitioning sub-problems that directly reflects the hierarchical topology of the system. 
This yields an improved process mapping in practice~\cite{GlobalMultisection,fonseca2020better}.

Intuitively, for any given hierarchy $S$ (independent of the application) recursive multi-section can be implemented in the streaming model with $\ell$ successive passes of any one-pass partitioning algorithm over the input graph.
In the first pass, the whole graph is partitioned in $a_\ell$ blocks using a scoring function like  Fennel or LDG.
In the second pass, the nodes previously assigned to each block are partitioned in $a_{\ell-1}$ sub-blocks.
The same logic is propagated until the $\ell^{th}$ pass, when each node is finally assigned to a unique PE.
Since there are no assignments of nodes until the last pass over the graph, this is not an online algorithm.
We make this algorithm online by compressing all steps performed during the $\ell$ passes in a single pass, as follows.
After a node is loaded, assign it to one of the $a_\ell$ blocks $\{V^\ell_1,\ldots,V^\ell_{a_\ell}\}$ in layer~$\ell$.
Then, for each layer $i<\ell$, assign the node to one of the $a_{i}$ sub-blocks of the block chosen in the previous step.
After going through all layers, the node is directly assigned to a PE, which makes this approach feasible for online execution.
Algorithm~\ref{alg:One-pass Process Mapping} summarizes the structure of our online recursive multi-section. Here, score depends on the algorithm logic used, e.g. Fennel or LDG, as well as other parameters that are specific to our multi-section algorithm. We give more details in Section \ref{subsec:subproblems}.
Figure~\ref{fig:hierarchy_problems} illustrates how a node is assigned to a permanent block on the fly using our online recursive multi-section.
Note that it produces exactly the same result as the version with $\ell$ passes since it does not violate any data dependency.
This is the case since each decision made in a given pass of the multiple-pass version only relies on nodes previously streamed during that same pass.

\begin{algorithm}[t] %
	\caption{Online Recursive Multi-Section} %
	\label{alg:One-pass Process Mapping} %
	\begin{algorithmic}[1] %
		\For{$u \in V(G)$}
		\State Load $N(u)$
		\State $X \leftarrow \{V^\ell_1,\ldots,V^\ell_{a_\ell}\}$, $V^* \leftarrow \emptyset$
		
		\For{$i \in \{\ell,\ldots,1\}$}
		\State \textbf{if} $i \neq \ell$ \textbf{then} $X \leftarrow $ sub-blocks of $V^*$
		\State $V^* \leftarrow$ $\arg \max\limits_{W\in X}\big\{score(W)\big\}$
		\State Assign vertex to sub-block $V^*$
		\EndFor
		\EndFor
	\end{algorithmic}
\end{algorithm}

\begin{figure}[b!]
	\centering
	\includegraphics[width=0.4\linewidth]{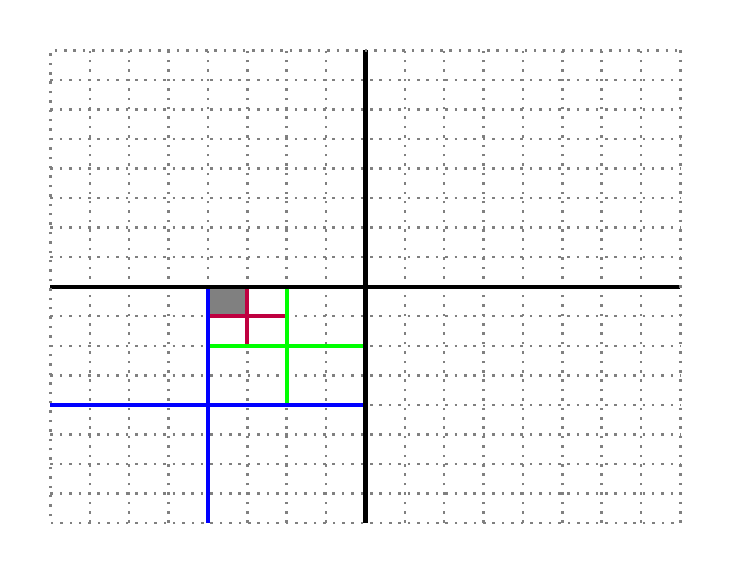}
	\caption{Assigning a node over a grid of ${256}$ blocks contained in a hierarchy ${S=4:4:4:4}$ with online recursive multi-section. In the first ${4}$-partition (black), the node is assigned to the lower left block. In the following ${4}$-partitions, it is assigned to the upper right sub-block (blue), its upper left sub-block (green), and finally its upper left sub-block (purple), which is a block from the original partitioning problem.}
	\label{fig:hierarchy_problems}
	\vspace*{-.5cm}
\end{figure}

Note that the algorithm can be applied to compute any hierarchical partitioning and is not limited to the process mapping application. 
However, as the offline multi-section, in case of the process mapping application our online multi-section exploits the inherent structure of the hierarchical process mapping problem in two ways:
(i) its layout reproduces the hierarchical communication topology;
(ii) the top-down order in which the nodes are assigned to sub-blocks of previously assigned blocks reflects the order in which the communication costs decrease in a communication topology.
In other words, this top-down order reflects the need for primarily avoiding cut edges among modules of higher layers in the communication hierarchy.
We now analyze the time and space complexity of our algorithm.
When implementing the online multi-section algorithm, we need to keep the weight of the complete hierarchy of blocks and sub-blocks in memory.
That is necessary since one-pass partitioning algorithms such as Fennel and LDG keep track of block weights in order to compute their scores.
Following from this fact, we show the space complexity of our algorithm in Lemma~\ref{theo:num_blocks_msec} and Theorem~\ref{theo:mem_usage_msec}.
Then, we derive the running time complexity of our algorithm in Theorem~\ref{theo:running_time_msec} and show an important special case in Corollary~\ref{theo:expected_running_time_msec}.

\begin{lemma}
	\label{theo:num_blocks_msec}
	Online recursive multi-section needs $O(k)$ space to store block weights throughout the algorithm.
\end{lemma}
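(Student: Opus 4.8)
The plan is to model the full hierarchy of blocks and sub-blocks as a rooted tree and simply count its nodes. The conceptual root corresponds to the whole graph; its children are the $a_\ell$ blocks of layer $\ell$; each layer-$i$ block is split into exactly $a_{i-1}$ children, namely its sub-blocks in layer $i-1$; and the leaves are the $k$ final blocks in layer $1$. Since every one-pass scoring rule such as Fennel or LDG needs only a block's weight in order to score it, and since in a single online pass we must retain these weights for every block that may still receive a vertex later in the stream, the storage cost is $O(1)$ per tree node. Hence it suffices to show that this tree has $O(k)$ nodes.

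First I would count the number of blocks per layer. Layer $i$ contains exactly $b_i := \prod_{j=i}^{\ell} a_j$ blocks, so that $b_1 = k$ and $b_\ell = a_\ell$, and the total number of (sub-)blocks whose weight must be kept in memory is $\sum_{i=1}^{\ell} b_i$.

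The crux is to bound this sum. Using the identity $b_i = a_i\, b_{i+1}$ together with the standing assumption $a_i \geq 2$ for all $i$, we obtain $b_{i+1} \leq b_i / 2$, and therefore by induction $b_i \leq k / 2^{\,i-1}$ starting from $b_1 = k$. Summing the resulting geometric series gives
\[
\sum_{i=1}^{\ell} b_i \;\leq\; k \sum_{i=1}^{\ell} \frac{1}{2^{\,i-1}} \;<\; 2k,
\]
so the hierarchy tree has fewer than $2k$ nodes and the total space is $O(k)$.

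I do not expect a genuine obstacle here; the only point to handle carefully is that the bound must hold uniformly over all admissible hierarchies $\mathcal{S}$, including very deep ones (large $\ell$) and very unbalanced ones. The assumption $a_i \geq 2$ is precisely what guarantees that the block count shrinks by at least a factor of two from one layer to the next, turning the sum into a convergent geometric series rather than a merely telescoping one; without it (for instance $a_i = 1$) a layer could replicate the block count and the total could grow like $\ell k$. I would therefore make explicit where $a_i \geq 2$ enters and note that the resulting bound of $2k$ is independent of the depth $\ell$.
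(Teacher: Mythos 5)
Your proof is correct and follows essentially the same argument as the paper: layer $i$ contains $\prod_{r=i}^{\ell} a_r$ blocks, the assumption $a_r \geq 2$ bounds this by $k/2^{i-1}$, and summing the geometric series gives fewer than $2k$ block weights in total. The tree framing and the explicit induction via $b_{i+1} \leq b_i/2$ are only presentational variations on the paper's direct product bound.
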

\vspace*{-.3cm}
\begin{proof}
	By definition, the multi-section consists of $\ell$~layers such that a layer $i \in \{1,\ldots,\ell\}$ contains exactly $\prod_{r=i}^{\ell}{a_r}$~blocks whose weight we need to keep track of.
	As we define $a_r \geq 2 \ \forall r$, we can write $\prod_{r=i}^{\ell}{a_r} \leq (1/{2^{i-1}})\prod_{r=1}^{\ell}{a_r} = (1/{2^{i-1}})k$.
	Hence, the total amount of block weights that we need to keep track of is $\sum_{i=1}^{\ell}{\prod_{r=i}^{\ell}{a_r}} \leq \sum_{i=1}^{\ell}{(1/{2^{i-1}})k} \leq 2k$.
\end{proof}

\vspace*{-.2cm}
\begin{theorem}
	\label{theo:mem_usage_msec}
	Online recursive multi-section coupled with Fennel or LDG needs $O(n+k)$ memory.
\end{theorem}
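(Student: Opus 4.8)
The plan is to decompose the algorithm's memory footprint into three independent components, bound each one separately, and observe that their sum is $O(n+k)$. The three components are: (i) the data structure that tracks the weights of all blocks and sub-blocks across the hierarchy; (ii) the persistent record of which leaf block (PE) each already-streamed node was assigned to; and (iii) the transient working memory used while processing the current node. For component (i) I would simply invoke Lemma~\ref{theo:num_blocks_msec}, which already establishes that storing the weight of every block over all $\ell$ layers costs $O(k)$ space. This is precisely the information Fennel and LDG need, since both scoring functions penalize assignment to heavy blocks and therefore must read the current weight of each candidate block.

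For component (ii), the key observation is that both Fennel and LDG evaluate a candidate block $V_i$ through the term $|V_i \cap N(u)|$, i.e.\ the number of neighbors of the current node $u$ that have already been placed in $V_i$. Because the one-pass model never revisits the input, the only way to recover this quantity for an earlier-streamed neighbor is to remember the block into which that neighbor was permanently assigned. Hence the algorithm must keep one block label per node, and since each label is an integer in $\{1,\ldots,k\}$, this amounts to $O(n)$ words. I expect this to be the conceptual crux of the argument: it explains why, despite the streaming setting, the memory is genuinely linear in $n$ rather than sublinear, as the assignment map is unavoidable in the one-pass model.

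For component (iii), while processing $u$ the algorithm holds the adjacency list $N(u)$ delivered by the stream together with an auxiliary structure that accumulates the counts $|V_i \cap N(u)|$ over the candidate blocks of the current layer. The adjacency list occupies $O(d(u)) = O(n)$ space and is released as soon as $u$ is assigned, so it does not accumulate across nodes; the count structure touches at most $O(k)$ candidate blocks. Summing the three bounds yields $O(k) + O(n) + O(n+k) = O(n+k)$, as claimed. The only mildly delicate point is to verify that the working memory of component (iii) does not grow with the number of already-processed nodes, which follows because each streamed adjacency list is discarded immediately after its node is permanently assigned.
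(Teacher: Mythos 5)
Your proposal is correct and follows essentially the same route as the paper's proof: it invokes Lemma~\ref{theo:num_blocks_msec} for the $O(k)$ block-weight storage and charges $O(n)$ for a single block label per streamed node, which is exactly the paper's decomposition (your extra accounting of the transient per-node working memory is a harmless elaboration). The paper's one-line proof additionally stresses that a single \emph{leaf} label per node suffices because all of its superblocks can be inferred from the hierarchy --- a point your argument uses implicitly when the scores at higher layers are computed from the stored leaf assignments.
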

\vspace*{-.3cm}
\begin{proof}
	Due to the hierarchical structure of multi-section, Fennel and LDG may keep track of a single block assignment per node, which is enough to infer all its superblocks.
	Hence, the space complexity $O(n+k)$ directly follows from Lemma \ref{theo:num_blocks_msec}.
\end{proof}
\vspace*{-.2cm}
\begin{theorem}
	\label{theo:running_time_msec}
	Online recursive multi-section coupled with Fennel or LDG has time complexity $O(m\ell + n\sum_{i=1}^{\ell}{a_i})$.
\end{theorem}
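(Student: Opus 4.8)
The plan is to bound the total running time by summing the work that Algorithm~\ref{alg:One-pass Process Mapping} performs for each node $u \in V(G)$ across the $\ell$ layers of the hierarchy, and then to exploit $\sum_{u \in V} d(u) = 2m$ to collapse the per-node bounds into the claimed form. First I would fix a node $u$ of degree $d(u)$ and analyze the inner loop over the layers $i \in \{\ell, \ldots, 1\}$. At each layer the algorithm evaluates $score(W)$ for the $a_i$ candidate (sub-)blocks $W \in X$ and selects the maximizer, so the whole cost reduces to understanding how expensive one such layer is.

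Next I would argue that the work at a single layer $i$ is $O(d(u) + a_i)$. Both the Fennel and the LDG score of a candidate block $W$ are determined by two quantities: the number $|W \cap N(u)|$ of already-streamed neighbors of $u$ currently assigned to $W$, and the block weight (or size) $|W|$, which Lemma~\ref{theo:num_blocks_msec} and Theorem~\ref{theo:mem_usage_msec} let us read and update in $O(1)$ from the $O(k)$ stored weights. To obtain all $a_i$ counts $|W \cap N(u)|$ simultaneously I would scan the $d(u)$ neighbors once: for each neighbor, its single stored block assignment (guaranteed sufficient by Theorem~\ref{theo:mem_usage_msec}) is decoded in $O(1)$ into its layer-$i$ superblock, the relevant candidate among the $a_i$ sub-blocks of the previously fixed $V^*$ is identified, and a counter is incremented. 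Scanning costs $O(d(u))$, while initializing the $a_i$ counters and taking the final $\arg\max$ costs $O(a_i)$, giving $O(d(u) + a_i)$ per layer.

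Summing over the $\ell$ layers yields a per-node cost of $O\!\big(\sum_{i=1}^{\ell}(d(u) + a_i)\big) = O\!\big(d(u)\,\ell + \sum_{i=1}^{\ell} a_i\big)$. Summing this over all $n$ nodes and using $\sum_{u \in V} d(u) = 2m$ gives $O\!\big(\ell \sum_{u} d(u) + n \sum_{i=1}^{\ell} a_i\big) = O\!\big(m\ell + n \sum_{i=1}^{\ell} a_i\big)$, which is the claimed bound; the $O(m+n)$ cost of loading all adjacency lists is dominated by this, since $a_i \geq 2$ forces $\sum_{i=1}^{\ell} a_i \geq 2\ell$ and hence $n\sum_i a_i \geq 2n$ and $m\ell \geq m$.

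The step I expect to require the most care is the $O(1)$ decoding claim used in the per-layer bound: I must justify that, given only a single stored assignment per neighbor, the correct layer-$i$ candidate sub-block is identifiable in constant time and that neighbors lying outside the currently fixed superblock $V^*$ are excluded without extra cost. This is precisely where the hierarchical encoding underlying Theorem~\ref{theo:mem_usage_msec} is used: writing a block index in mixed-radix form with respect to $(a_1, \ldots, a_\ell)$ makes every superblock a quotient extractable in $O(1)$, so a neighbor outside $V^*$ simply matches none of the $a_i$ counters. With that observation in place, the remainder is the routine re-summation described above.
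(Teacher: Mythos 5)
Your proof is correct and follows essentially the same route as the paper's: the paper likewise bounds the cost of assigning a node $u$ at a given layer $i$ by $O(|N(u)|+a_i)$ and then sums over all layers and nodes to obtain $O\big(m\ell + n\sum_{i=1}^{\ell}a_i\big)$. The only difference is that you spell out the implementation details (the counter-based neighbor scan and the $O(1)$ mixed-radix decoding of superblocks) that the paper compresses into the phrase ``using Fennel or LDG,'' which strengthens rather than changes the argument.
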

\vspace*{-.3cm}
\begin{proof}
	The online recursive multi-section assigns each node~$u$ over $\ell$ layers.
	Using Fennel or LDG, the running time to assign~$u$ in a given layer~$i$ is $O(|N(u)|+a_i)$.
	Accounting for all layers and nodes, this sums up to $O(m\ell + n\sum_{i=1}^{\ell}{a_i})$.
\end{proof}
\vspace*{-.2cm}
\begin{corollary}
	\label{theo:expected_running_time_msec}
	Given a constant $b \geq 2$, if $a_i = b$, $\forall i \in \{1,\ldots,\ell\}$, then online recursive multi-section coupled with Fennel or LDG has time complexity $O\big((m + n)\log{k}\big)$.
\end{corollary}
\vspace*{-.3cm}
\begin{proof}
	Based on the assumption, we derive the claimed bound from Theorem~\ref{theo:running_time_msec} by proving $\ell=O(\log{k})$ and $\sum_{i=1}^{\ell}{a_i} = O(\log{k})$.
	The first part trivially holds since $k=\prod_{i=1}^{\ell}{a_i}=b^\ell \Rightarrow \ell = \log_{b}{k}$.
	To prove the second part, notice that $\sum_{i=1}^{\ell}{a_i} = b\ell$.
	Since $\ell = \log_{b}{k}$, it follows that $k=\prod_{i=1}^{\ell}{a_i} = b\log_{b}{k} = O(\log{k})$.
\end{proof}

\subsection{Partitioning Subproblems}
\label{subsec:Process Mapping}
\label{subsec:subproblems}

Our multi-section algorithm implies a hierarchy of one-pass partitioning subproblems.
These subproblems are self contained one-pass partitioning problems, so they can be solved by any one-pass partitioning algorithm in literature.
In this section, we examine these subproblems based on the parameters of the some process mapping problem. 
Consider all the partitioning subproblems contained in some layer $i$ of our multi-section.
First, note that they are homogeneous, which means that they all receive as input an induced subgraph with roughly the same number of nodes and edges and partition it among $a_i$ blocks.
More specifically, a subproblem in layer $i$ partitions among $k_i = a_{i}$ blocks and receives as input a graph containing roughly $n_i = n/{\prod}_{r=i+1}^{\ell}{a_r}$ nodes and $m_i = m/{\prod}_{r=i+1}^{\ell}{a_r}$ edges.
As a consequence, the size constraint $L_i$ of a block from subproblems in layer $i$ is $L_i =  \lceil(1+\epsilon) n_i/k_i \rceil 
\simeq L_{\max} \prod_{r=1}^{i-1}{a_r}$, which is simply the sum of capacities of all blocks from the original problem contained in it.
These variations in subproblem size have further implications depending on the partitioning algorithm coupled with our scheme, as we show next.

\paragraph*{Fennel Mapping.}
Using Fennel in the online recursive multi-section requires attention to its constant $\alpha$.
Recall that it is defined as $\alpha = \sqrt{k} m/ n^{3/2}$ for partitioning the whole graph into $k$ blocks with vanilla Fennel.
Using this value of $\alpha$ for all partitioning subproblems contained in our multi-section is not a natural choice since we intend to apply Fennel independently for each subproblem.
Independently applying the definition of Fennel for each partitioning subproblem contained in our multi-section implies~$\ell$ different parameters $\alpha_i$, $i \in \{1, \ldots, \ell\}$, for all multi-section layers.
We derive the value of~$\alpha_i$ by applying the Fennel definition $\alpha_i = \sqrt{k_i} \frac{m_i}{ n_i^{3/2}}$ and substituting the values of $k_i$, $m_i$, and $n_i$ which we have already discussed.
It follows  that 
$\alpha_i = \frac{\alpha }{ \sqrt{\prod_{r=1}^{i-1}{a_r}}}$.

\paragraph*{LDG Mapping.}
Combining LDG with our multi-section is straightforward, since it directly uses the remaining capacity of each block as a multiplicative~penalty.
Hence, we can configure LDG for a subproblem in layer $i$ by simply computing this penalty based on the block capacity $L_i$, whose value we have already discussed.

\paragraph*{Hybrid  Mapping}
It is also possible to solve distinct subproblems with different partitioning algorithms.
This possibility opens a door to a trade-off when we mix a high-quality algorithm such as Fennel with a fast algorithm such as Hashing.
In particular, we can use Fennel to solve top-layer subproblems (whose communication is more expensive) and Hashing to solve bottom-layer subproblems (whose communication is cheaper).
When we solve $h$ ($1 \leq h \leq \ell$) upper layers of subproblems Fennel and the remaining ones with Hashing, we produce an algorithm with the running time given in Theorem~\ref{theo:hybrid_map}.
Note that this hybridization is faster than coupling the multi-section with Fennel only and slower than coupling it with Hashing only.

\begin{theorem}
	\label{theo:hybrid_map}
	Solving the $h$ upper layers of the	online recursive multi-section with Fennel and the remaining ones with Hashing has overall time complexity $O\big(mh + n\sum_{i=\ell-h+1}^{\ell}{a_i}\big)$.
\end{theorem}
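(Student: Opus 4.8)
The plan is to split the total work into the contribution of the $h$ upper layers handled by Fennel and that of the $\ell-h$ lower layers handled by Hashing, bound each separately, and then combine. By the indexing of the algorithm, the upper (Fennel) layers are exactly those with $i \in \{\ell-h+1,\ldots,\ell\}$, while the lower (Hashing) layers are those with $i \in \{1,\ldots,\ell-h\}$.

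First I would reuse the per-node, per-layer cost already established in the proof of Theorem~\ref{theo:running_time_msec}: assigning a node $u$ in a Fennel layer $i$ costs $O(|N(u)|+a_i)$. Summing over all nodes and over the $h$ Fennel layers, and using the handshake identity $\sum_{u\in V}|N(u)| = 2m$, this contributes $\sum_{i=\ell-h+1}^{\ell} O\!\left(2m + n a_i\right) = O\!\left(mh + n\sum_{i=\ell-h+1}^{\ell} a_i\right)$, which is already the claimed form.

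The crux of the proof is the Hashing layers, and this is the step I expect to be the main obstacle. Here I would exploit that, unlike Fennel or LDG, Hashing assigns a node purely as a function of its identifier, independently of the neighborhood $N(u)$, of block weights, and of the assignments made in other layers. Consequently the $\ell-h$ lower layers need not be traversed one at a time: within the block already fixed by the Fennel layers, a single hash evaluation can place $u$ directly into one of the $\prod_{r=1}^{\ell-h}a_r$ leaf sub-blocks in $O(1)$ time, so all Hashing layers together cost $O(n)$. The subtlety is precisely that a naive layer-by-layer accounting would charge $O(\ell-h)$ per node, producing an $O(n(\ell-h))$ term that cannot in general be absorbed into the target bound, since the number of Hashing layers bears no relation to the $a_i$ of the Fennel layers.

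Finally I would combine the two contributions to obtain $O\!\left(mh + n\sum_{i=\ell-h+1}^{\ell}a_i + n\right)$. Since $a_i \geq 2$ for all $i$ and there is at least one Fennel layer ($h\geq 1$), we have $\sum_{i=\ell-h+1}^{\ell}a_i \geq 2$, so the trailing $n$ is dominated by $n\sum_{i=\ell-h+1}^{\ell}a_i$ and can be dropped, yielding the stated complexity $O\!\left(mh + n\sum_{i=\ell-h+1}^{\ell}a_i\right)$.
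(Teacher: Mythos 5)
Your proposal is correct and follows essentially the same route as the paper's proof: both bound the $h$ Fennel layers via Theorem~\ref{theo:running_time_msec} and then observe that all Hashing layers can be collapsed into a single $O(1)$ hash evaluation per node, so their total $O(n)$ cost is absorbed into the Fennel bound. Your write-up is somewhat more explicit than the paper's (in particular about why the trailing $O(n)$ term is dominated, using $a_i \geq 2$ and $h \geq 1$), but the decomposition and the key observation are identical.
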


\begin{proof}
	From Theorem~\ref{theo:running_time_msec}, running the top $h$ layers of the online multi-section with Fennel costs $O\big(mh + n\sum_{i=\ell-h+1}^{\ell}{a_i}\big)$. 
	Asymptotically, this complexity is not changed when we include the complexity of solving the remaining $(\ell-h)$ layers with Hashing.
	This is the case since it suffices to apply a single Hashing operation per node, which costs~$O(1)$.
\end{proof}

\paragraph*{Remapping.}
It is possible to iteratively improve a process mapping solution through multiple passes of our online multi-section algorithm over the input graph.
This can be achieved by coupling our algorithm with restreaming algorithms such as  ReFennel or ReLDG with the proper adaptations. However, remapping is beyond the scope of this paper.

\vspace*{-.1cm}
\subsection{General Partitioning}
\label{subsec:General Partitioning}

The previous sections assume that some hierarchy is given.
This hierarchy is then used as input for the online recursive multi-section.
In this section, we show how to partition a streamed graph into an arbitrary amount of blocks using the online recursive multi-section when no explicit hierarchy is given. 
We do this by creating an artificial hierarchy.
We derive the complexity of the proposed approach and discuss its~partitioning~subproblems.

The \emph{recursive bisection} is a successful offline approach to partition graphs into an arbitrary number $k$ of blocks. %
If~$k$ is a power of $2$, the algorithm works as a recursive multi-section with $\log_2{k}$ layers of $2$-way partitioning subproblems. 
Otherwise, it is irregular and cannot be represented by a string~$S$.
Analogously, we define an online recursive bisection to partition a graph on the fly when no hierarchy is given.
Recall that the whole hierarchy of blocks and sub-blocks has to be kept in memory throughout the execution of the online recursive multi-section, hence the same requirement applies here.
We build this hierarchy, which we call \emph{multi-section tree}, as a preliminary step for the streaming partitioning process.
In Algorithm~\ref{alg:Recursively Define Subproblems}, we define the procedure $\textsc{BuildHierarchy}$ which recursively builds this multi-section tree for any value~of~$k$.
This procedure receives as input a parent block $P$ for the multi-section tree as well as the endpoints $k_L$ and $k_R$ of the range of blocks to be covered by the multi-section tree.
In line~2, it terminates the recursion when $P$ is a leaf of the multi-section tree, which is true when $k_L=k_R$.
Otherwise, it creates two sub-blocks for $P$ and inserts them as sons of $P$ in the multi-section tree (line~3). 
Then, it splits the range $\{k_L,\ldots,k_R\}$ in roughly equal parts and performs 2 recursive calls to itself.

\begin{algorithm} %
	\caption{Create Blocks for Multi-Section Tree} 
	\label{alg:Recursively Define Subproblems} %
	\raggedright
	\hspace*{\algorithmicindent} \textbf{Input} \\
	\hspace*{\algorithmicindent} $P$: Parent block in the hierarchy \\
	\hspace*{\algorithmicindent} $k_L$: Left endpoint of blocks covered by hierarchy \\
	\hspace*{\algorithmicindent} $k_R$: Right endpoint of blocks covered by hierarchy \\
	\begin{algorithmic}[1] %
		\Procedure{BuildHierarchy}{$P, k_L, k_R$}
		
		\If{$k_L = k_R$}
		\Return \EndIf
		\State $P_L,P_R \leftarrow$ Create sub-blocks for $P$
		\State $\textsc{BuildHierarchy}\Big(P_L, k_L, \big\lfloor \frac{k_L+k_R}{2} \big\rfloor\Big)$
		\State $\textsc{BuildHierarchy}\Big(P_R, \big\lfloor \frac{k_L+k_R}{2} \big\rfloor+1, k_R\Big)$
		\EndProcedure
	\end{algorithmic}
\end{algorithm}

We further generalize the recursive bisection to \emph{recursive $b$-section} for a \emph{base}~$b$.
Given a base $b \geq 2$, a recursive $b$-section is a recursive multi-section associated with a multi-section tree in which blocks have up to $b$ sub-blocks.
Algorithm~\ref{alg:Recursively Define Subproblems} can be adapted to deal with~$b$-section by creating $\min\{b,k_R-k_L+1\}$ sub-blocks in line~3 and, afterwards, making the same amount of recursive calls with proper parameters.

We create the multi-section tree by calling the command $\textsc{BuildHierarchy}(\emptyset,1,k)$ at the cost of $O(k)$.
Given a multi-section tree, we solve it by using Algorithm~\ref{alg:One-pass Process Mapping}.
Analogously to Lemma~\ref{theo:num_blocks_msec} and Theorem~\ref{theo:mem_usage_msec}, it is possible to prove that the online recursive $b$-section respectively stores $O(k)$ blocks and needs $O(n+k)$ memory when coupled with Fennel or LDG.
Theorem~\ref{theo:expected_running_time_bisec} provides a running time bound.

\begin{theorem}
	\label{theo:expected_running_time_bisec}
	Online recursive $b$-section coupled with Fennel or LDG has time complexity $O\big( (m+nb) \log_b{k}\big)$.
\end{theorem}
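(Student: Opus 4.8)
The plan is to mirror the argument of Theorem~\ref{theo:running_time_msec} and Corollary~\ref{theo:expected_running_time_msec}, but adapted to the (possibly irregular) multi-section tree produced by \textsc{BuildHierarchy} rather than to a regular hierarchy string $\mathcal{S}$. The two quantities I need to control are (i) the height of the multi-section tree, which here plays the role of $\ell$, and (ii) the per-layer scoring cost, which is governed by the branching factor $b$.

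First I would bound the tree height. The $b$-section variant of \textsc{BuildHierarchy} (Algorithm~\ref{alg:Recursively Define Subproblems}) receives a range $\{k_L,\ldots,k_R\}$ and splits it into $\min\{b,\,k_R-k_L+1\}$ roughly equal subranges before recursing. Hence the size of the range covered by a block shrinks by a factor of at least $b$ per level: starting from a range of size $k$, after $d$ levels the largest surviving range has size at most $\lceil k/b^{d}\rceil$, and the recursion terminates once this reaches $1$. This yields a height of $\lceil \log_b k\rceil = O(\log_b k)$. Consequently, when Algorithm~\ref{alg:One-pass Process Mapping} streams a node $u$ from the root down to its leaf, it passes through $O(\log_b k)$ layers.

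Next I would bound the work per node per layer. At any internal node of the tree the number of candidate sub-blocks is $\min\{b,\cdot\}\le b$, so, exactly as in the proof of Theorem~\ref{theo:running_time_msec}, scoring $u$ among the sub-blocks of its current block with Fennel or LDG costs $O(|N(u)|+b)$ (the $|N(u)|$ term from distributing $u$'s neighbors among the candidate blocks, the $b$ term from evaluating the penalty and taking the $\arg\max$). Multiplying by the $O(\log_b k)$ layers traversed and summing over all nodes gives $\sum_{u\in V} O\big((|N(u)|+b)\log_b k\big) = O\big((\sum_{u} |N(u)| + nb)\log_b k\big)$. Using $\sum_{u} |N(u)| = 2m$ then yields the claimed $O\big((m+nb)\log_b k\big)$.

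Finally I would note that the one-time $O(k)$ construction cost of the multi-section tree is subsumed by the streaming cost (since $k\le n$ in our setting), so it does not affect the asymptotics. I expect the only genuine subtlety to be the height bound for the irregular tree: unlike the regular case of Corollary~\ref{theo:expected_running_time_msec}, the leaves may sit at two distinct depths $\lfloor \log_b k\rfloor$ and $\lceil \log_b k\rceil$ when $k$ is not a power of $b$, so I must argue via the geometric shrinkage of the range sizes rather than by the clean identity $k=b^{\ell}$. Once the $O(\log_b k)$ height is established, the remainder is the same per-layer accounting as before.
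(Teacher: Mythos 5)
Your proof is correct and takes essentially the same route as the paper's own proof: bound the number of layers of the multi-section tree by $\lceil \log_b k\rceil$, charge $O(|N(u)|+b)$ per node per layer for Fennel/LDG scoring, and sum over all nodes and layers to obtain $O\big((m+nb)\log_b k\big)$. The only differences are that you explicitly justify the height bound via geometric shrinkage of the range sizes (which the paper simply asserts) and you fold in the $O(k)$ tree-construction cost (which the paper accounts for separately, before the theorem); both are harmless refinements rather than a different approach.
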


\begin{proof}
	The number of layers in the multi-section tree is up to $\lceil\log_b{k}\rceil$.
	In other words, each node should be assigned through up to $1+\log_b{k}$ layers.
	Since all subproblems partition among up to $b$~blocks, then the running time to assign a node $u$ over a layer is $|N(u)|+b$.
	Accounting for all layers and nodes, this sums up to $(2m + nb)(\log_b{k}+1) = O\big( (m+nb) \log_b{k}\big)$.
\end{proof}

\newcommand{ \scaleFactorSmall} {0.30}
\newcommand{ \imgScaleFactorSmall} {0.9}
\newcommand{ \capPositionSmall} {-.45cm}
\newcommand{ \afterCapSmall} {-.45cm}

\paragraph*{Heterogeneous Partitioning.}
\label{subsec:Heterogeneous Partitioning}

When $k$ is not a power of $b$, the recursive $b$-section hierarchy may contain some partitioning subproblems with heterogeneous blocks.
We deal with this by computing the size constraint of each block in the multi-section tree individually.
For simplicity, we explain how to do this when $b=2$ (recursive bisection), but this can be easily extended to an arbitrary~$b$.
For example when $k=5$, the two blocks in the first 2-way partitioning subproblem respectively cover $2$ and $3$ of the blocks from the original $5$-way partitioning.
Hence, these two blocks shall respectively have capacities $2L_{\max}$ and $3L_{\max}$, where $L_{\max}$ is the size constraint of a block in the original $k$-way partitioning.
Putting it in general terms, each block from the multi-section tree created in line $3$ of Algorithm~\ref{alg:Recursively Define Subproblems} covers $t = k_R - k_L + 1$ blocks of the original $k$-way partitioning.
For simplicity, we use $t$ to refer to this amount covered by a given block, and we use $t_1$ and $t_2$ to refer to the amounts covered by the two blocks of a partitioning subproblem.
The size constraint of a block is $t\times L_{\max}$.

When a subproblem has blocks with heterogeneous size constraints, the used partitioning algorithm has to cope with it.
We adapt \emph{Fennel} to address this issue by increasing (decreasing) the constant $\alpha$ used to compute the score of a specific block when its size constraint is lower (higher) than the other block from the same subproblem.
Recall that $\alpha$ depends on the numbers of nodes, edges and blocks for a specific subproblem.
A subproblem receives as input an induced subgraph with roughly $\frac{t_1+t_2}{k}$ of the nodes and edges from the original $k$-way partitioning.
We redefine the number of blocks as $\frac{t_1+t_2}{t_1}$ for the first block and $\frac{t_1+t_2}{t_2}$ for the second block of a subproblem.
This value equals $2$ for both blocks when $t_1=t_2$. 
Nevertheless, if $t_1 \neq t_2$, this value is larger (smaller) than $2$ for the block with smaller (larger) size constraint.
Summing up, the value of $\alpha$ for a given block will be $\sqrt{t}$ times smaller than the value $\alpha$ from the original $k$-way partitioning problem.
Consequently, the Fennel penalty function for imbalance will be more weighted for blocks with lower capacity, which tackles the heterogeneous balancing issue.
For \emph{LDG}, a natural adaptation for heterogeneous blocks arises from its very definition, since it directly uses the remaining capacity of each block as a multiplicative~penalty.

\ifFull
\paragraph*{Hybrid  Meta-Partitioning}
It is also possible to solve distinct subproblems with different partitioning algorithms.
This possibility opens a door to a trade-off when we mix a high-quality algorithm such as Fennel with a speedy algorithm such as Hashing.
In particular, we can use Fennel to solve top-layer subproblems (whose communication is more expensive) and Hashing to solve bottom-layer subproblems (whose communication is cheaper).
If we do this by solving the whole $h$ ($1 \leq h <\ell$) bottom layers of subproblems with Hashing, we come to an overall time complexity $O(m + nh + n\sum_{i=\ell}^{h+1}{a_i})$.
Observe that this hybridization is faster than coupling our scheme with Fennel only and slower than coupling it with Hashing only.
\fi{}

\subsection{Shared-Memory Parallelization}
\label{subsec:Shared-Memory Parallelization}

Since the online recursive multi-section is a vertex-centric algorithm, we can parallelize it by independently splitting the nodes of the graph among threads.
More specifically, it can be achieved with OpenMP by parallelizing the \emph{for} loop in line 1 of Algorithm~\ref{alg:One-pass Process Mapping}.
This parallelization requires the nodes from the input graph to be concurrently loaded by distinct threads alongside with their neighborhoods, which is a reasonable assumption in many practical environments. 
Regarding data consistency, the only source of concern are the block weights, whose values can be concurrently read and incremented by multiple threads.
This is important because an inconsistency could compromise the load balance between blocks.
We ensure writing consistency by making the incrementation an atomic operation.
Potentially, a block can still be overloaded if multiple threads decide to assign a node to it at the same time.
Since this is very unlikely, we do not use any synchronization  to keep it from happening.
Finally, there is no relevant concern about the consistency of assigning nodes to blocks, since it is written by a single thread and can be read by multiple threads.

\ifFull

\subsection{Miscellanea}
\label{subsec:Miscellanea}

In this section, we propose a modification to LDG which obtains high-quality partitions in linear time.
We quickly describe this simple modification and show its complexity.

The dependency on $k$ in the cost of the state-of-the-art algorithms for one-pass partitioning is mostly due to the evaluation of blocks for a specific objective function.
In particular, note that the LDG objective function is defined such that it is zero when a block contains no neighbor or has no remaining capacity, and more than zero otherwise.
Hence, it is possible to avoid computing the objective function for blocks with no neighbors when there is at least one neighbor in a block with remaining capacity.
Nevertheless, when the score of all blocks is equal to zero, LDG still has to go through all its blocks since it should choose for the block with most available capacity.
In our small modification, we use Hashing in this case.
Thus, this sums up to an overall complexity $O(m+n)$.
This extends naturally for restreaming, since all neighbors of a restreamed node are already assigned to blocks.
As a consequence, it very unlikely that the Hashing tie-breaker would be even called, and we would still have the complexity $O(m+n)$ for each pass over the instance.

\subsection{Sampling}
\label{subsec:Sampling}

In this section, we propose fast sampling versions of Fennel and LDG.
Note that the running time of Fennel and LDG depends on~$k$ since they evaluate the score of all blocks for each streamed node.
We apply the concept of sampling~\cite{tille2006sampling} to speedup these algorithms without a huge compromise of solution quality.
More specifically, we enumerate four different approaches to randomly select blocks to be evaluated. 
Let $c$ be a sampling parameter and let $u$ be a loaded node.

\begin{enumerate}
	\item \emph{Neighbors:} Select~$c$ blocks containing neighbors of~$u$.
	\label{enum:sampling_neigh}
	\item \emph{Nonneighbors:}
	Evaluate all blocks containing neighbors of $u$ and select~$c$ out of the remaining blocks.
	\label{enum:sampling_nonneigh}
	\item \emph{Blocks:} Select~$c$ out of the~$k$ available blocks.
	\label{enum:sampling_blocks}
	\item \emph{Twofold:} Select~$c/2$ out of the~$k$ available blocks and~$c/2$ out of the blocks containing neighbors of~$u$.
	\label{enum:sampling_twofold}
\end{enumerate}

Assuming we go through the list of neighbors of each streamed node at least once, the four sampling approaches have theoretical complexity $O(m+nc)$, or $O(m+n)$ if $c$ is a constant.
Nevertheless, we can make the approaches (\ref{enum:sampling_neigh}), (\ref{enum:sampling_blocks}), and (\ref{enum:sampling_twofold}) even faster in practice by dissociating the \emph{IO} operation of loading a node and its adjacency list from the \emph{partitioning} operation of evaluating blocks.
In this case, the running time of the partitioning operation is $O(nc)$, or $O(n)$ if~$c$ is a constant. 
This complexity matches that of the Hashing algorithm, which is very poor regarding edge-cut.

\fi{}

\section{Experimental Evaluation}
\label{sec:Experimental Evaluation}

\paragraph*{Methodology.} 
We performed our implementations inside the KaHIP framework (using C++) and compiled them using gcc 9.3 with full optimization turned on (-O3 flag). 
Since no official versions of Fennel, LDG, and Hashing are available in public repositories, we implemented them in our framework.
Our implementations of these algorithms reproduce the results presented in the respective papers and are optimized for running time as much as possible. 
We have used a machine with one sixteen-core Intel Xeon  Silver 4216 processor running at $2.1$ GHz, $100$ GB of main memory, $16$ MB of L2-Cache, and $22$ MB of L3-Cache running Ubuntu 20.04.1. The machine can handle 32 threads with hyperthreading. %
Unless otherwise mentioned we stream the input directly from the internal memory to obtain clear running time comparisons. However, note that the algorithm could also be run streaming the graph from hard~disk.
We perform two types of experiments: experiments for the process mapping objective (with given hierarchies as specified below) and standard graph partitioning.
Unless otherwise mentioned, we use the following configurations for process mapping experiments: $D=1:10:100$, $\mathcal{S}=4:16:r$, with $r \in \{1,2,3,\ldots,128\}$. Hence, $k=64 r$.
This is the same configuration used in previous studies~\cite{schulz2017better,GlobalMultisection,fonseca2020better}.
Analogously, we use $k = 64s$, $s \in \{1,2,3,\ldots,128\}$ for partitioning experiments unless mentioned otherwise.
We allow a fixed imbalance of $3\%$ for all experiments (and all algorithms) since this is a frequently used value in the partitioning literature. 
All partitions computed by all algorithms were balanced.
Depending on the focus of the experiment, we measure running time, edge-cut, and/or the mapping communication cost $J$.
We perform ten repetitions per algorithm and instance using random seeds for initialization, and we compute the arithmetic average of the computed objective functions and running time per instance.
When further averaging over multiple instances, we use the geometric mean in order to give every instance the same influence on the \textit{final score}. 
Unless mentioned otherwise, we average all results of each algorithm grouped by~$k$.
Given a result of an algorithm~$A$, we express its value $\sigma_A$ (which can be objective or running time) as \emph{improvement} over an algorithm~$B$, computed as $\big(\frac{\sigma_B}{\sigma_A}-1\big)*100\%$;
\ifFull	
\emph{ratio}, computed as $\big(\frac{\sigma_A}{\sigma_{max}}\big)$ with $\sigma_{max}$ being the maximum result for $k_L$ among all competitors including $A$;
\emph{relative} value over an algorithm~$B$, computed as $\big(\frac{\sigma_A}{\sigma_{B}}\big)$.
\fi{}
We also present \emph{performance profiles} which relate the running time (quality) of a group of algorithms to the fastest (best) one on a per-instance basis (rather than grouped by $k$).
The x-axis shows a factor $\tau$ while the y-axis shows the percentage of instances for which A has up to $\tau$ times the running time (quality) of the fastest (best)~algorithm.

\paragraph*{Instances.}
We selected graphs from various sources to test our algorithm.
Most of the considered graphs were used for benchmark in previous works on graph partitioning.
The graphs wiki-Talk and web-Google, as well as most networks of co-purchasing, roads, social, web, autonomous systems, citations, circuits, similarity, meshes, and miscellaneous are publicly available either in~\cite{snapnets}.
Prior to our experiments, we converted these graphs to a vertex-stream format while removing parallel edges, self loops, and directions, and assigning unitary weight to all nodes and edges.
We also use graphs such as eu-2005 and in-2004, which are available at the 10$^{th}$ DIMACS Implementation Challenge website~\cite{benchmarksfornetworksanalysis}. 
Finally, we include some artificial random graphs.
We use the name \Id{rggX} for \emph{random geometric graph} with
$2^{X}$ nodes where nodes represent random points in the unit square and edges connect nodes whose Euclidean distance is below $0.55 \sqrt{ \ln n / n }$. 
We use the name \Id{delX} for a graph based on a Delaunay triangulation of $2^{X}$ random points in the unit square~\cite{kappa}.
Basic properties of the graphs under consideration can be found in Table~\ref{tab:graphs}.
For our experiments, we split the graphs in two disjoint sets.
In all experiments, we stream the graphs with the natural given order of the nodes.

\begin{table}[t]
	\centering
	\footnotesize
	\begin{tabular}{| l  r  r  r | }
		\hline
		Graph & $n$& $m$ & Type\\

		\hline

		\multicolumn{4}{|c|}{Test Set} \\
		\hline

		Dubcova1 & \numprint{16129} & \numprint{118440} & Meshes \\
		hcircuit & \numprint{105676}  & \numprint{203734} & Circuit \\

		coAuthorsDBLP & \numprint{299067}     & \numprint{977676}  & Citations \\
		Web-NotreDame & \numprint{325729}     & \numprint{1090108}  & Web \\
		Dblp-2010 & \numprint{326186}     & \numprint{807700}  & Citations \\
		ML\_Laplace & \numprint{377002} & \numprint{13656485} & Meshes \\
		coPapersCiteseer & \numprint{434102}     & \numprint{16036720}  & Citations \\
		coPapersDBLP & \numprint{540486}     & \numprint{15245729}  & Citations \\
		
		Amazon-2008 & \numprint{735323}  & \numprint{3523472} & Similarity \\
		eu-2005 & \numprint{862664}    & \numprint{16138468}  & Web \\
		web-Google & \numprint{916428}    & \numprint{4322051}  & Web \\

		ca-hollywood-2009 & \numprint{1087562} & \numprint{1541514} & Roads \\
		
		Flan\_1565 & \numprint{1564794} & \numprint{57920625} & Meshes \\
		
		Ljournal-2008 & \numprint{1957027}  & \numprint{2760388} & Social \\
		
		HV15R & \numprint{2017169}  & \numprint{162357569} & Meshes \\
		Bump\_2911 & \numprint{2911419}  & \numprint{62409240} & Meshes \\
		del21 & \numprint{2097152}  & \numprint{6291408} & Artificial \\	
		rgg21 & \numprint{2097152} & \numprint{14487995} & Artificial \\
		
		FullChip & \numprint{2987012} & \numprint{11817567} & Circuit \\
		soc-orkut-dir & \numprint{3072441} & \numprint{117185083} & Social \\
		patents & \numprint{3750822}     & \numprint{14970766}  & Citations \\
		cit-Patents & \numprint{3774768}     & \numprint{16518947}  & Citations \\
		soc-LiveJournal1 & \numprint{4847571} & \numprint{42851237}   & Social \\
		circuit5M & \numprint{5558326} & \numprint{26983926} & Circuit \\
		italy-osm & \numprint{6686493}  & \numprint{7013978} & Roads \\
		great-britain-osm & \numprint{7733822} & \numprint{8156517} & Roads \\

		\hline

	\end{tabular}
	\vspace*{.25cm}
	\caption{Graphs for experiments.}
	
	\label{tab:graphs}
	\vspace*{-0.75cm}
\end{table}

\ifFull	
\newcommand{ \scaleFactor} {0.42}
\newcommand{ \imgScaleFactor} {0.9}
\newcommand{ \capPosition} {-.45cm}
\newcommand{ \afterCap} {-.45cm}

\begin{figure*}[p!]
	\captionsetup[subfigure]{justification=centering}
	\centering
	\vspace*{-.75cm}
	\begin{subfigure}[t]{\scaleFactor\textwidth}
		\centering
		\includegraphics[width=\imgScaleFactor\textwidth]{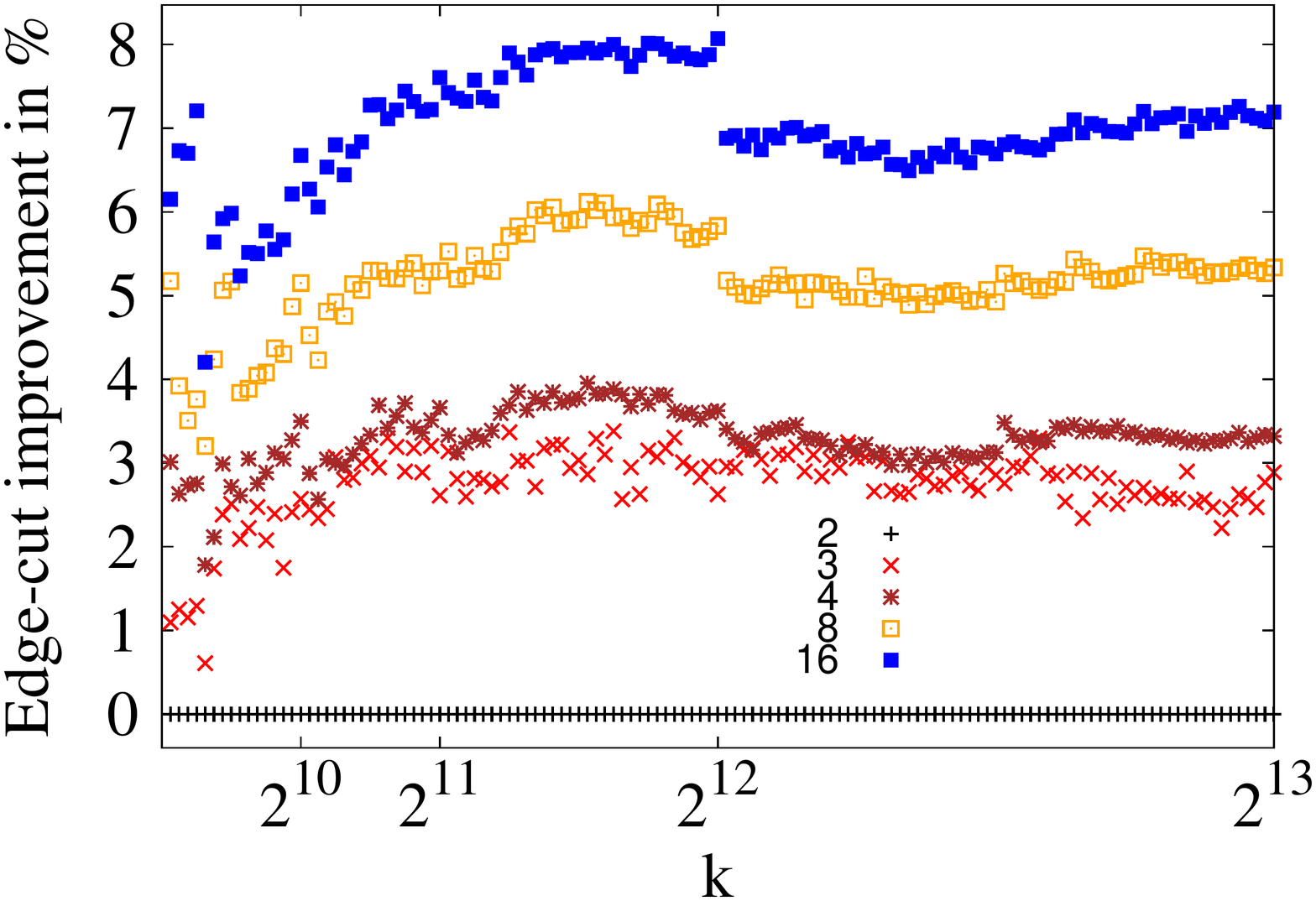}
		\vspace*{\capPosition}
		\caption{Edge-cut improvement plot for different base sizes.}
		\label{fig:basePar_res}
	\end{subfigure}\hspace{5mm}%
	\begin{subfigure}[t]{\scaleFactor\textwidth}
		\centering
		\includegraphics[width=\imgScaleFactor\textwidth]{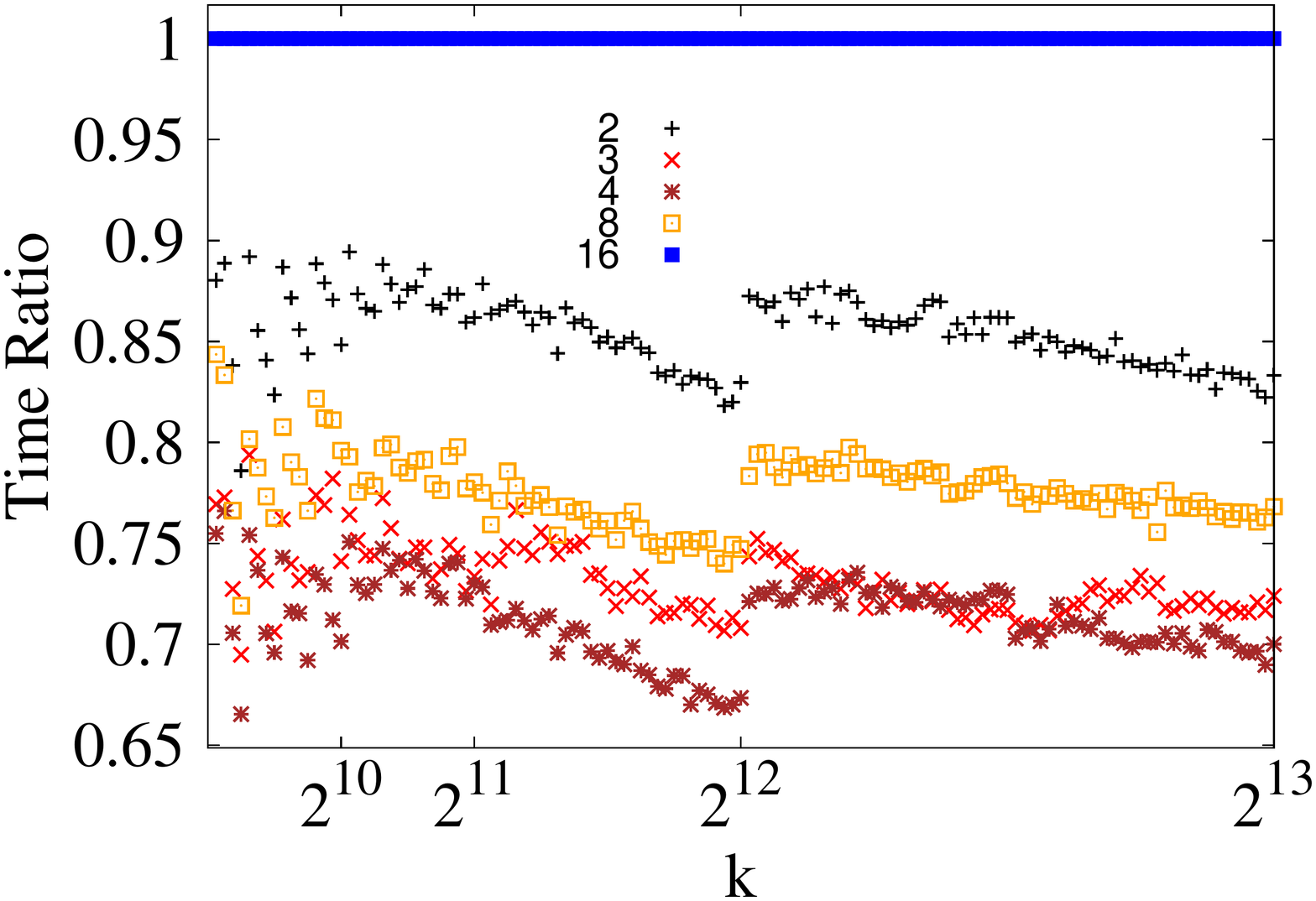}
		\vspace*{\capPosition}
		\caption{Running time ratio plot for different base sizes.}
		\label{fig:basePar_tim}
	\end{subfigure}
	
	\vspace*{\afterCap}
	\begin{subfigure}[t]{\scaleFactor\textwidth}
		\centering
		\includegraphics[width=\imgScaleFactor\textwidth]{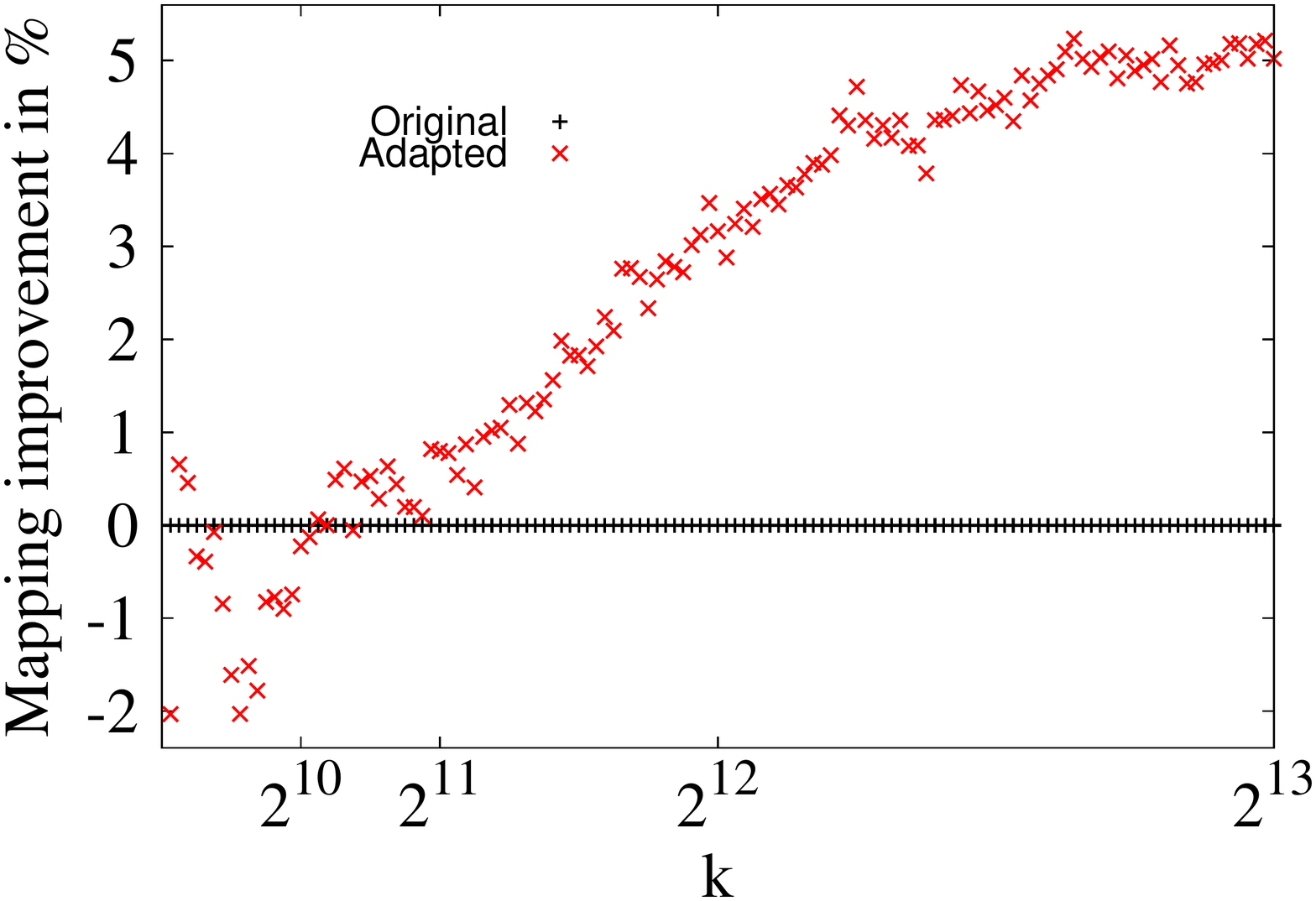}
		\vspace*{\capPosition}
		\caption{Process map improvement plot for parameter $\alpha$.}
		\label{fig:alphaMap_res}
	\end{subfigure}\hspace{5mm}%
	\begin{subfigure}[t]{\scaleFactor\textwidth}
		\centering
		\includegraphics[width=\imgScaleFactor\textwidth]{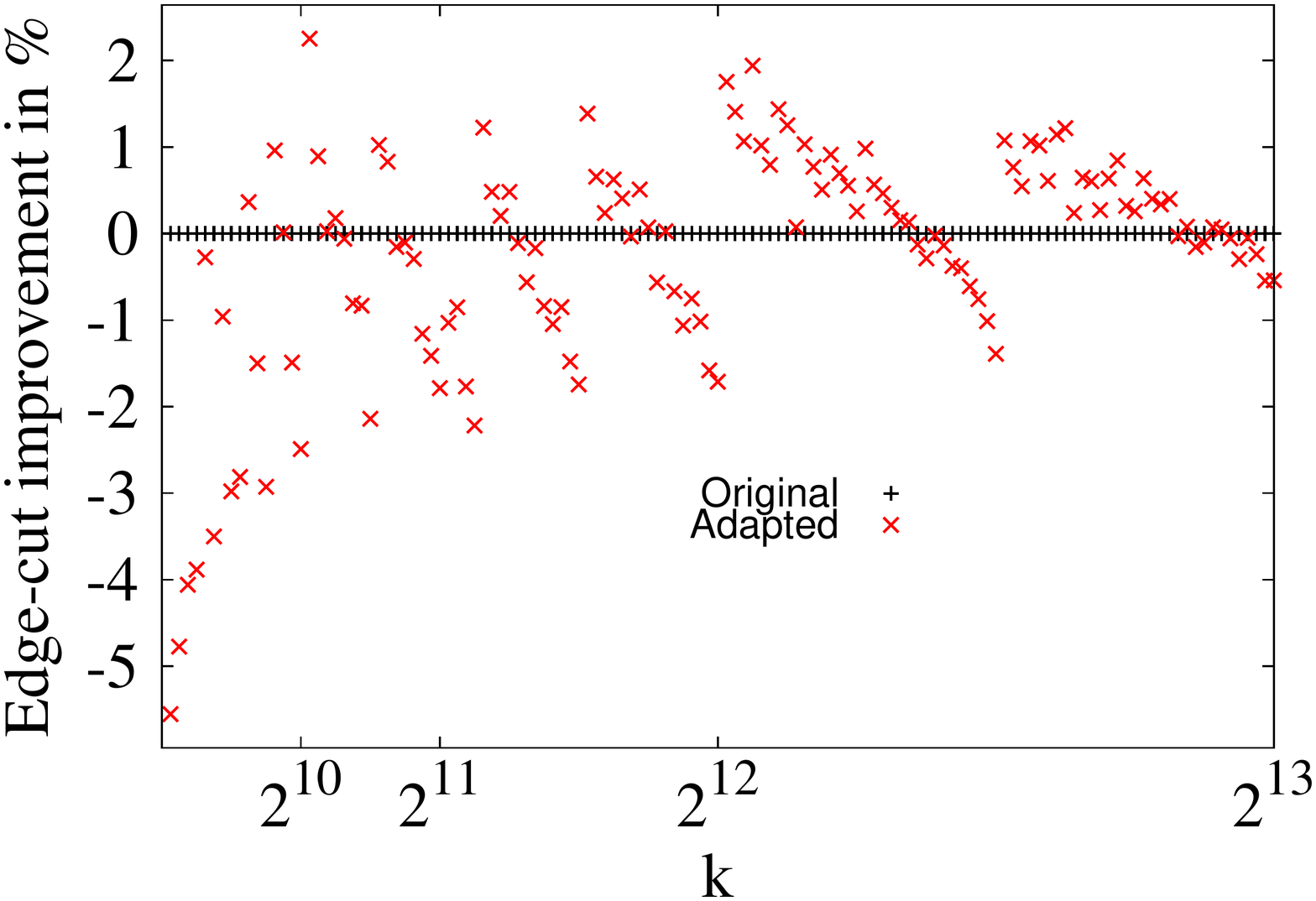}
		\vspace*{\capPosition}
		\caption{Edge-cut improvement plot for parameter $\alpha$.}
		\label{fig:alphaPar_res}
	\end{subfigure}
	\vspace*{\afterCap}
	\begin{subfigure}[t]{\scaleFactor\textwidth}
		\centering
		\includegraphics[width=\imgScaleFactor\textwidth]{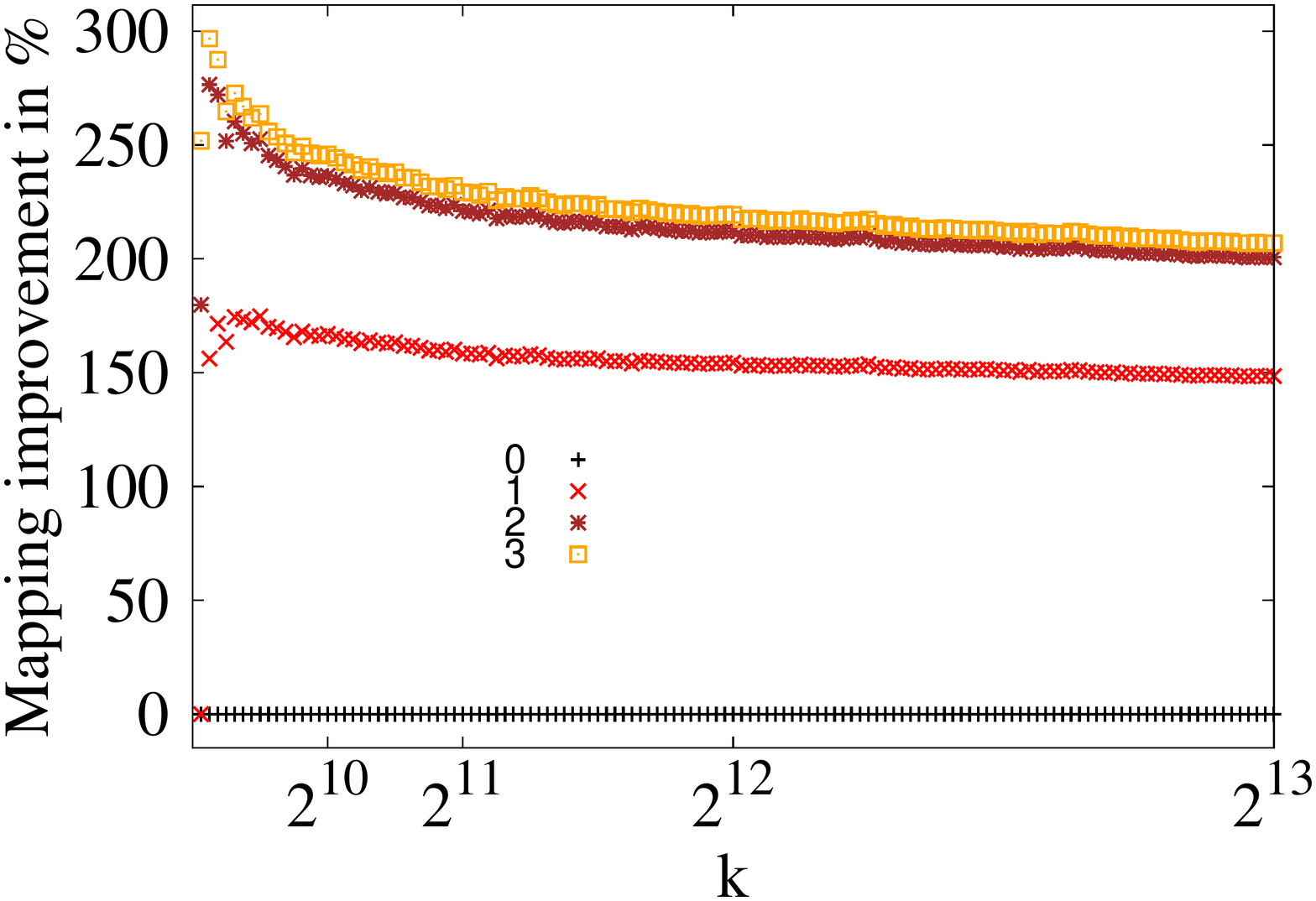}
		\vspace*{\capPosition}
		\caption{Process map improvement plot for number of top layers solved by Fennel while remaining ones are solved by Hashing.}
		\label{fig:hybridMap_res}
	\end{subfigure}\hspace{5mm}%
	\begin{subfigure}[t]{\scaleFactor\textwidth}
		\centering
		\includegraphics[width=\imgScaleFactor\textwidth]{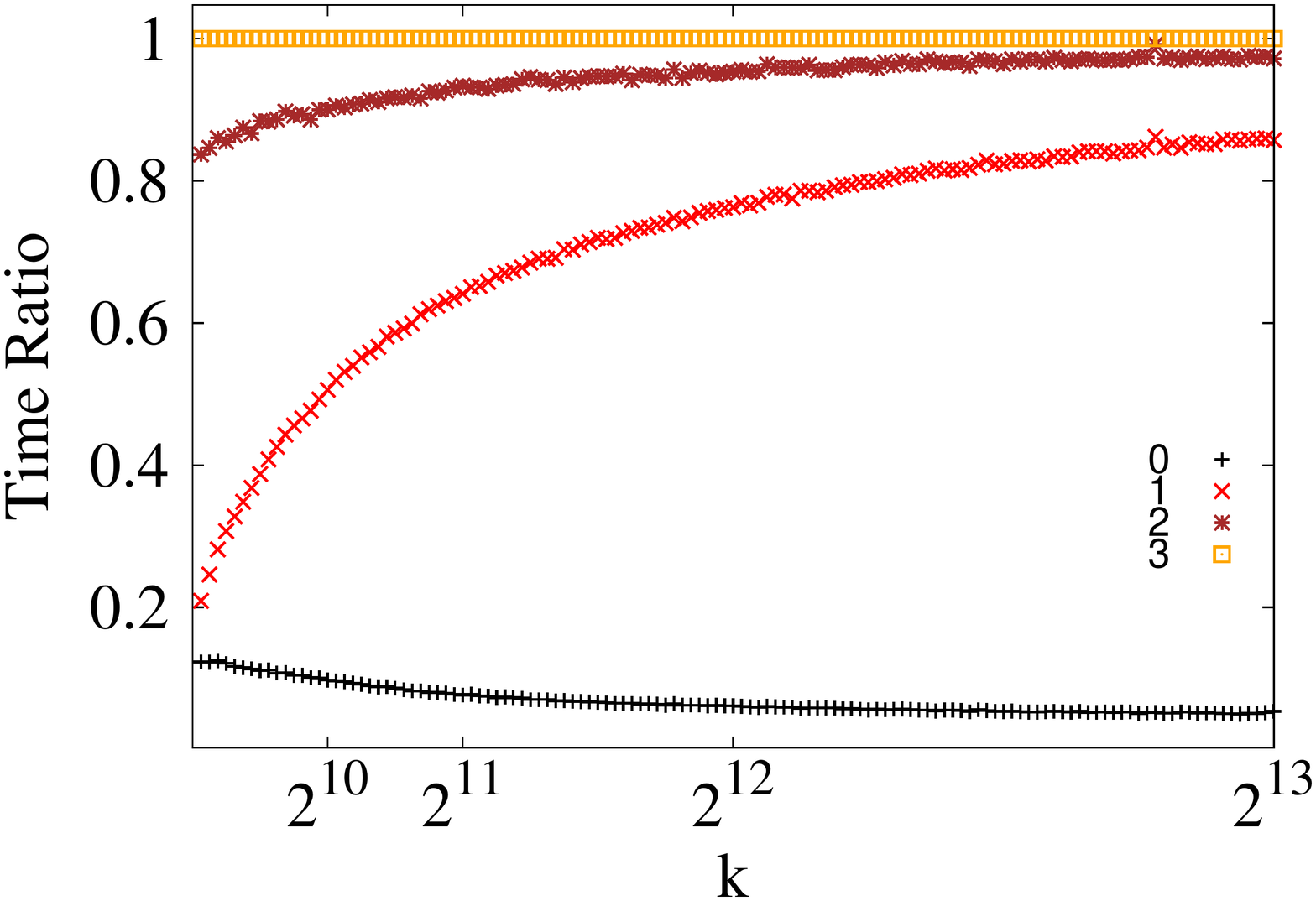}
		\vspace*{\capPosition}
		\caption{Running time ratio plot for number of top layers solved by Fennel while remaining ones are solved by Hashing.}
		\label{fig:hybridMap_tim}
	\end{subfigure}
	
	\begin{subfigure}[t]{\scaleFactor\textwidth}
		\centering
		\includegraphics[width=\imgScaleFactor\textwidth]{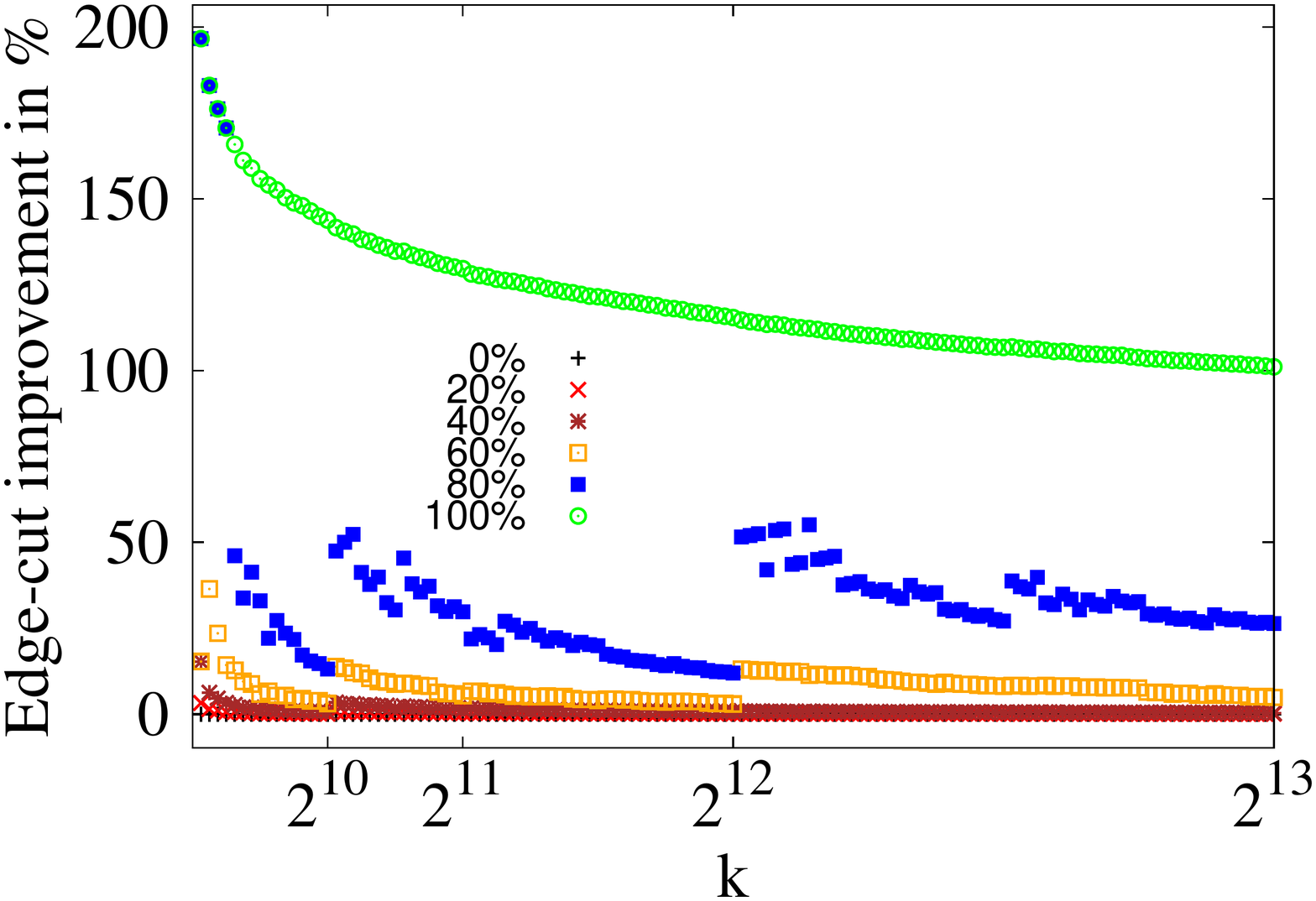}
		\vspace*{\capPosition}
		\caption{Edge-cut improvement plot for percentage of top layers solved by Fennel while remaining ones are solved by Hashing.}
		\label{fig:hybridPar_res}
	\end{subfigure}\hspace{5mm}%
	\begin{subfigure}[t]{\scaleFactor\textwidth}
		\centering
		\includegraphics[width=\imgScaleFactor\textwidth]{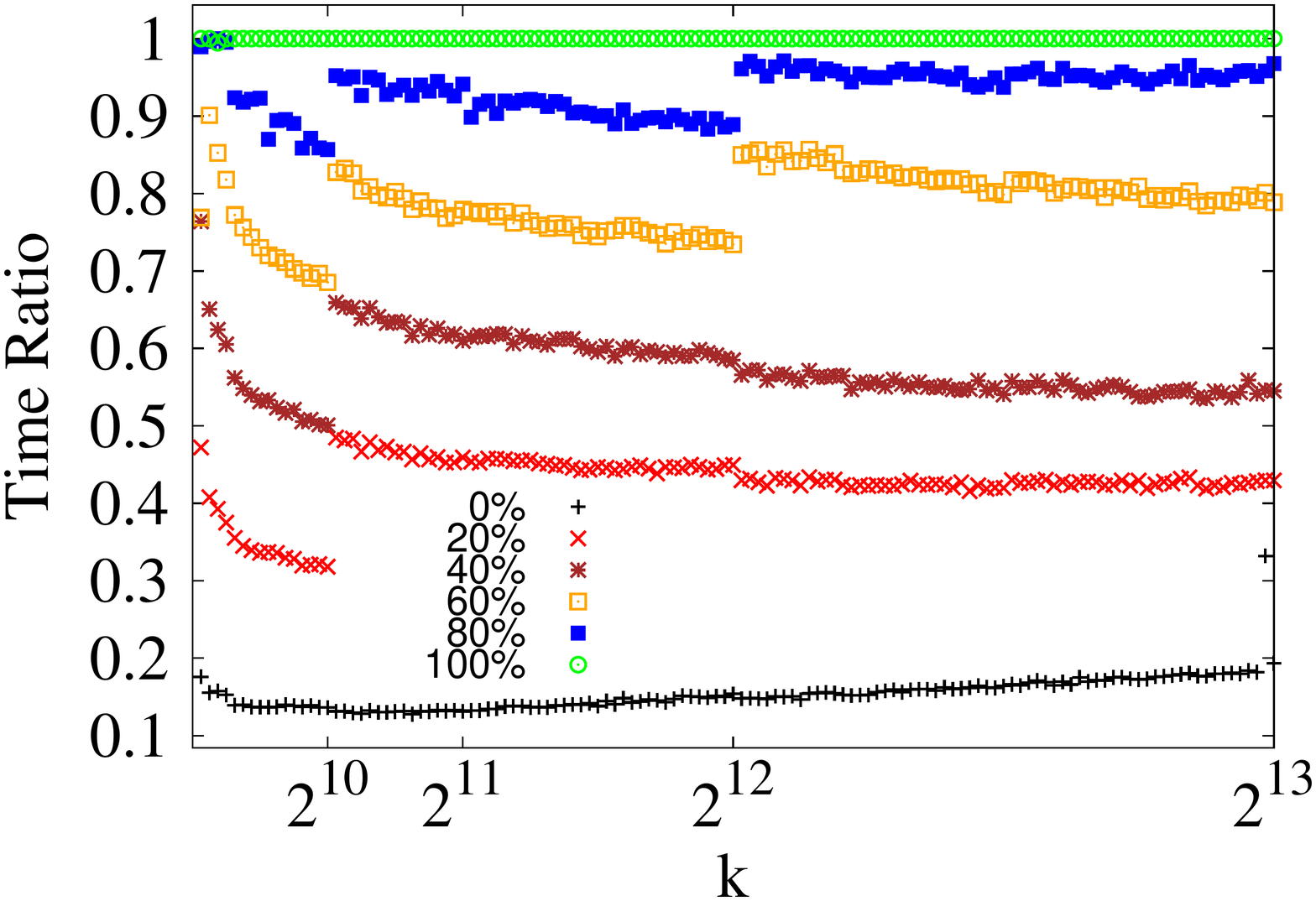}
		\vspace*{\capPosition}
		\caption{Running time ratio plot for percentage of top layers solved by Fennel while remaining ones are solved by Hashing.}
		\label{fig:hybridPar_tim}
	\end{subfigure}

	\vspace*{.35cm}
	\caption{Results for tuning and exploration experiments. Higher is better for process map / edge-cut improvement plots. Lower is better for running time ratio plots.}
	\label{fig:tuning_plots}
\end{figure*} 

\fi{}

\ifFull
\subsection{Parameter Study.}
\label{subsec:Parameter Study}

Since Fennel is a state-of-the-art streaming algorithm for minimizing edge-cut, we combine it with our multi-section algorithm throughout all experiments.
For completeness, we mention that the online multi-section produces on average $3.89\%$ better process mapping and $0.19\%$ better edge-cut when coupled with Fennel than when coupled with LDG.
While running Fennel within the online multi-section, we numerically approximate the required square root computations by a very fast operation in order to make the computations even faster.
We now present experiments to tune the online recursive multi-section and explore some of its parameters.
In our tuning experiments, we start with a baseline configuration consisting of the following parameters: 
(i) the Fennel parameter $\alpha$ is computed independently for each subproblem;
(ii) no layers of the multi-section are solved with Hashing; and 
(iii) a single thread is used.
Then, each experiment focuses on a single parameter of the algorithm while all the other parameters are kept invariable.
After each tuning experiment, we update the baseline to integrate the best found parameter.
In the \emph{exploration} experiments, we do not make decisions about the algorithm, but just evaluate the different possibilities offered by its free parameters.
Unless mentioned otherwise, we run the experiments of this section over all tuning graphs from Table~\ref{tab:graphs}.

\paragraph*{Tuning}

We begin by investigating how the size of the \emph{base} $b$ affects edge-cut and running time.
In Figures~\ref{fig:basePar_res}~and~\ref{fig:basePar_tim}, we present results for $b\in\{2,3,4,8,16\}$.
As Figure~\ref{fig:basePar_res} shows, the larger the base the better the edge-cut.
This happens because the larger the base, the closer the assignment decisions are to the decisions that vanilla Fennel would make.
At~the limit, a base equal to~$k$ makes the online multi-section operate exactly as vanilla Fennel.
In Figure~\ref{fig:basePar_tim}, note that the running time is not a monotonic function of the base value as is the edge-cut.
In particular, the approaches using $b=3,4$ present the overall smallest running times among all configurations.
This behavior indicates that the running time of the online multi-section is dominated by the operation of computing scores for blocks and sub-blocks.
Assuming a regular multi-section tree, the online multi-section algorithm scores $b\log_{b}{k}$ blocks per loaded node.
For integer constants~$k,b \geq 2$, we have $\arg \min_{b} {b\log_{b}{k}} = 3$.
A complementary reason is needed to explain why $b=4$ presented even better a running time than did $b=3$.
This happens because the blocks of each partitioning sub-problem are stored in a contiguous array, hence the larger base are favorable for cache efficiency.
The same reason clarifies why the approach with $b=8$ is faster than the approach with $b=2$ even though $2\log_{2}{k} < 8\log_{8}{k}$.
In light of the discussed results, we decide for $b=4$, which is the fastest approach and still produces a better edge-cut than the configurations with~$b~<~4$.

We now look at the parameter $\alpha$ associated with the Fennel objective function.
The baseline approach consists of simply applying the \emph{original}~$\alpha$ value associated with the $k$-way partitioning problem (\ie $\alpha = \sqrt{k} \frac{m}{ n^{3/2}}$) while computing block scores in the online multi-section.
Recall that we showed in Sections~\ref{subsec:subproblems}~and~\ref{subsec:General Partitioning} an \emph{adapted} way to compute~$\alpha$ independently for each partitioning subproblem of the multi-section algorithm.
We compare these two approaches in Figures~\ref{fig:alphaMap_res}~and~\ref{fig:alphaPar_res}.
Figure~\ref{fig:alphaMap_res} shows that the adapted approach is the best one for the process mapping objective.
This happens because each partitioning sub-problem in the multi-section represents an actual subproblem of the process mapping.
Hence, computing~$\alpha$ independently for each subproblem directly improves the overall objective function.
This can be also seen from another perspective.
The higher the layer of a partitioning subproblem in the multi-section, the more its edge-cut impacts the communication cost.
Note that the \emph{adapted}~$\alpha$ is favorable for this phenomenon:
The higher the layer of the multi-section, the smaller~$\alpha$ becomes, which indirectly increases the weight of the edge-cut on the Fennel objective function.
On the other hand, Figure~\ref{fig:alphaPar_res} shows that both approaches compute comparable edge-cut values on average. 
This happens because, when there is no given communication hierarchy, the subproblems contained in the multi-section do not represent real subproblems of the faced $k$-way partitioning.
As a consequence, computing a specific value of~$\alpha$ for each subproblem does not directly improve overall edge-cut, which is the objective of partitioning.
Summing up the results, we decide for the \emph{adapted}~$\alpha$.

\paragraph*{Exploration}
Next, we evaluate the effect of solving lower layers of the multi-section with \emph{Hashing}.
For a given communication hierarchy, we let different amounts $h \in \{0,1,2,3\}$ of upper layers from the multi-section be solved with Fennel while the remaining ones are solved with Hashing.
We plot results for these experiments in Figures~\ref{fig:hybridMap_res}~and~\ref{fig:hybridMap_tim}.
When no hierarchy is given, we let different percentages $h_\% \in \{0\%, 20\%, 40\%, 60\%, 80\%, 100\%\}$ of upper layers be solved with Fennel while the remaining ones are solved with Hashing.
More specifically, given a percentage $h_\%$, the actual number of layers for a given~$k$ is $h=\lceil{h_\%}{\log_4{k}}\rceil$.
Figures~\ref{fig:stateoftheartPar_res}~and~\ref{fig:stateoftheartPar_tim} display results for these experiments.
Both groups of experiments show an expected trade-off:
(i) the more the layers solved with Fennel, the better the edge-cut or process mapping objective;
(ii) the fewer the layers solved with Fennel, the better the running time.
Although the trends are analogous, note that the edge-cut degenerates faster than the process mapping objective as more hierarchy layers are solved with Hashing.
On one hand, solving only $h=2$ or $h=1$ layers of the multi-section with Fennel respectively generates $2.5\%$ and $21.5\%$ worse communication cost on average than solving all the three layers with Fennel.
This happens because, when a specific communication hierarchy is given as input, each partitioning subproblem of the multi-section corresponds to the decomposition of a real module into its submodules.
As a consequence,solving the upper layers with low edge-cut is already sufficient to produce a good total communication cost, since the higher the layer of the process mapping hierarchy, the more the edge-cut contained in it impacts the total communication cost.
Hence, the use of Hashing to partition lower layers of the online multi-section can be reasonable for the process mapping problem. 
On the other hand, solving only $80\%$ of the layers of the multi-section with Fennel already cuts $65.4\%$ more edges on average that solving $100\%$ of the layers with Fennel.
In other words, this shows that the random behavior of Hashing has a significant impact on the total edge-cut even if it is only used in $20\%$ of the multi-section layers.
Finally, note that some oscillation is observed in both figures~\ref{fig:stateoftheartPar_res}~and~\ref{fig:stateoftheartPar_tim} for the configurations in which between $20\%$ and $80\%$ of the upper layers are solved with Fennel.
This oscillation is due to the step shape of the number of layers $h=\lceil{h_\%}{\log_4{k}}\rceil$ solved by Fennel for these configurations.

\fi{}

\paragraph*{Parameter Tuning.}
We performed extensive tuning experiments using the graphs disjoint from the graphs in Table~\ref{tab:graphs}. Instead, we briefly summarize the main results.
The online multi-section produces on average $3.89\%$ better mapping and $0.19\%$ better edge-cut when coupled with Fennel than when coupled with LDG. Hence, we use Fennel as our scoring function. Computing adapted values of $\alpha$ for each partitioning subproblem is superior than using the default value of $\alpha$ of the original $k$-way partitioning. 
Particularly, it is on average $3.1\%$ faster while producing $9.7\%$ better mapping and cutting roughly the same amount of edges. 
Hence, our algorithm uses \emph{adapted} $\alpha$ values.
When no communication hierarchy is given, using the \emph{base} $b=4$ to build the multi-section tree is the fastest configuration overall.  
Using $b=4$, our algorithm is $16.7\%$ faster and cuts $3.2\%$ fewer edges than using $b=2$.
Hence, our algorithm uses \emph{base} $b=4$.
Using hashing on lower levels of the multi-section tree increases has a higher impact on the edge cut than on the process mapping objective. 
As expected, running time is also additionally decreased in both cases. 
For example, solving $67\%$ of the bottom layers of the multi-section with Hashing produces the following results on average in comparison to the non-hybrid configuration: $2.3$ times more cut edges, $27.5\%$ higher mapping objective, and $31.1\%$ less running time compared to vanilla streaming multisection.
Hence, we conclude that the approach yields a nice trade-off between solution quality and running time, but do not perform further experiments.
From now on, we refer to our online multi-section algorithm as \emph{OMS} when a communication hierarchy is given and \emph{nh-OMS} otherwise.

\vspace*{-.1cm}
\subsection{State-of-the-Art}
\label{subsec:State-of-the-Art}

In this section, we show experiments in which we compare the online recursive multi-section against the current state-of-the-art.
Except when mentioned otherwise, these experiments involve all the graphs from Table~\ref{tab:graphs}.
We identify Fennel and LDG as the state-of-the-art of non-buffered one-pass stream partitioning algorithms which aim at minimizing edge-cut.
Since Fennel generates better solutions on average than LDG~\cite{tsourakakis2014fennel}, we focus our experiments on Fennel without loss of generality.
We also include Hashing as a competitor, since it is the fastest algorithm for streaming partitioning.
To the best of our knowledge, there are no streaming partitioning algorithms specifically designed for process mapping.
For comparison purposes, we ran experiments with internal-memory tools, i.e. we compare to the fastest version of the integrated multi-level algorithm proposed in~\cite{fonseca2020better}, which we refer to as~\emph{IntMap}.
In addition, we compare our results against KaMinPar~\cite{gottesburen2021deep}, a very fast internal-memory algorithm that is orders of magnitudes faster than mt-Metis in terms of running time and produces comparable cuts while also enforcing balance (in contrast to mt-Metis). In particular, the purpose of runing IntMap and KaMinPar is to provide a reference of streaming algorithms in comparision to internal memory algorithms. 
We set a timeout of $30$ minutes for an algorithm to partition a graph, after which the execution is interrupted.
The only algorithm which exceeded this time limit for some instances was IntMap.
Hence, we exclude this algorithm from the plots.

\paragraph*{Solution Quality (Process Mapping).}
We start by looking at the mapping quality produced by OMS. 
In Figure\ref{fig:stateoftheartMap_res}, we plot the average mapping improvement over Hashing. 
KaMinPar produces the best mapping overall, with an average improvement of $1117\%$ over Hashing.
Among the instances IntMap could solve, it improves on average $7.6\%$ over KaMinPar. 
IntMap produces the best overall mapping in $67\%$ of the cases it could solve. 
Note that this is in line with previous works in the area of graph partitioning, i.e. streaming algorithms typically compute worse solutions than internal memory algorithms that always have access to the whole graph. 
OMS has an average improvement of $257.8\%$ over Hashing, while Fennel improves $153\%$ on average over Hashing.
In a direct comparison, OMS produces on average $41\%$ better mappings than Fennel.
In Figure~\ref{fig:stateoftheartMap_respp}, we plot the mapping performance profile.
In the plot, KaMinPar produces the best overall mapping for all instances instances.
We conclude that OMS produces the best mapping among the streaming~competitors.

\paragraph*{Solution Quality (Edge-Cut).}

Next we look at the edge-cut of nh-OMS.
In Figure~\ref{fig:stateoftheartPar_res}, we plot the edge-cut improvement over Hashing.
KaMinPar produces the best overall edge-cut, with an average improvement of $3024\%$ over Hashing.
IntMap cuts $20\%$ more edges on average than KaMinPar for the instances it solved. 
Among the streaming algorithms, Fennel and nh-OMS produce improvements of respectively $130.5\%$ and $118.2\%$ on average over Hashing.
In a direct comparison, nh-OMS cuts on average $5\%$ more edges than Fennel.
In Figure~\ref{fig:stateoftheartPar_GPpp}, we plot the edge-cut performance profile.
KaMinPar produces the smallest edge-cut for all instances.
Among the streaming algorithms, Fennel is slightly better than nh-OMS and both are distinctly better than Hashing.

\paragraph*{Running Time.}We now investigate the running time of OMS and nh-OMS.
In Figure~\ref{fig:stateoftheartPar_tim}, we plot the speedup over Fennel.
On average, Hashing is~$1301$~times faster than Fennel, while nh-OMS and OMS are respectively $133$~and~$55.4$ times faster than Fennel.
In a direct comparison, Hashing is on average $9.7$ times faster than nh-OMS and $23.4$ times faster than OMS.
KaMinPar comes next with an average speedup of $5.3$ over Fennel.
In a direct comparison, nh-OMS and OMS are respectively $25.1$ and $10.5$ times faster than KaMinPar.
KaMinPar is on average $2.5$ times faster than IntMap for the instances IntMap could solve. 
In Figure~\ref{fig:stateoftheartPar_timpp}, we plot the running time performance profile.
Note that the running time of nh-OMS is at most 16 times slower than Hashing for $100\%$ of the experiments, which is in accordance with Theorem~\ref{theo:expected_running_time_bisec}.
As the third fastest algorithm, OMS is considerably faster than all the other competitors, including Fennel.

\paragraph*{Memory Requirements.}We now look at the memory requirements of the different algorithms. We measured this on three graphs of our collection. In this case, we run stream graphs directly from disk for the streaming algorithms.
Besides being the fastest algorithm, Hashing needs the least memory overall.
For soc-orkut-dir, HV15R, and soc-LiveJournal1, it respectively needs $17$MB, $13$MB, and $24$MB.
OMS, nh-OMS, and Fennel have comparable consumption, all of which using $19$MB, $14$MB, and $25$MB for the mentioned graphs, respectively.
Finally, KaMinPar respectively uses $4.1$GB, $4.1$GB, and $1.8$GB, while IntMap respectively uses $34$GB, $12$GB, and $10$GB.

\begin{figure*}[t]
	\captionsetup[subfigure]{justification=centering}
	\centering
	\begin{subfigure}[]{\scaleFactorSmall\textwidth}
		\centering
		\includegraphics[width=\imgScaleFactorSmall\textwidth]{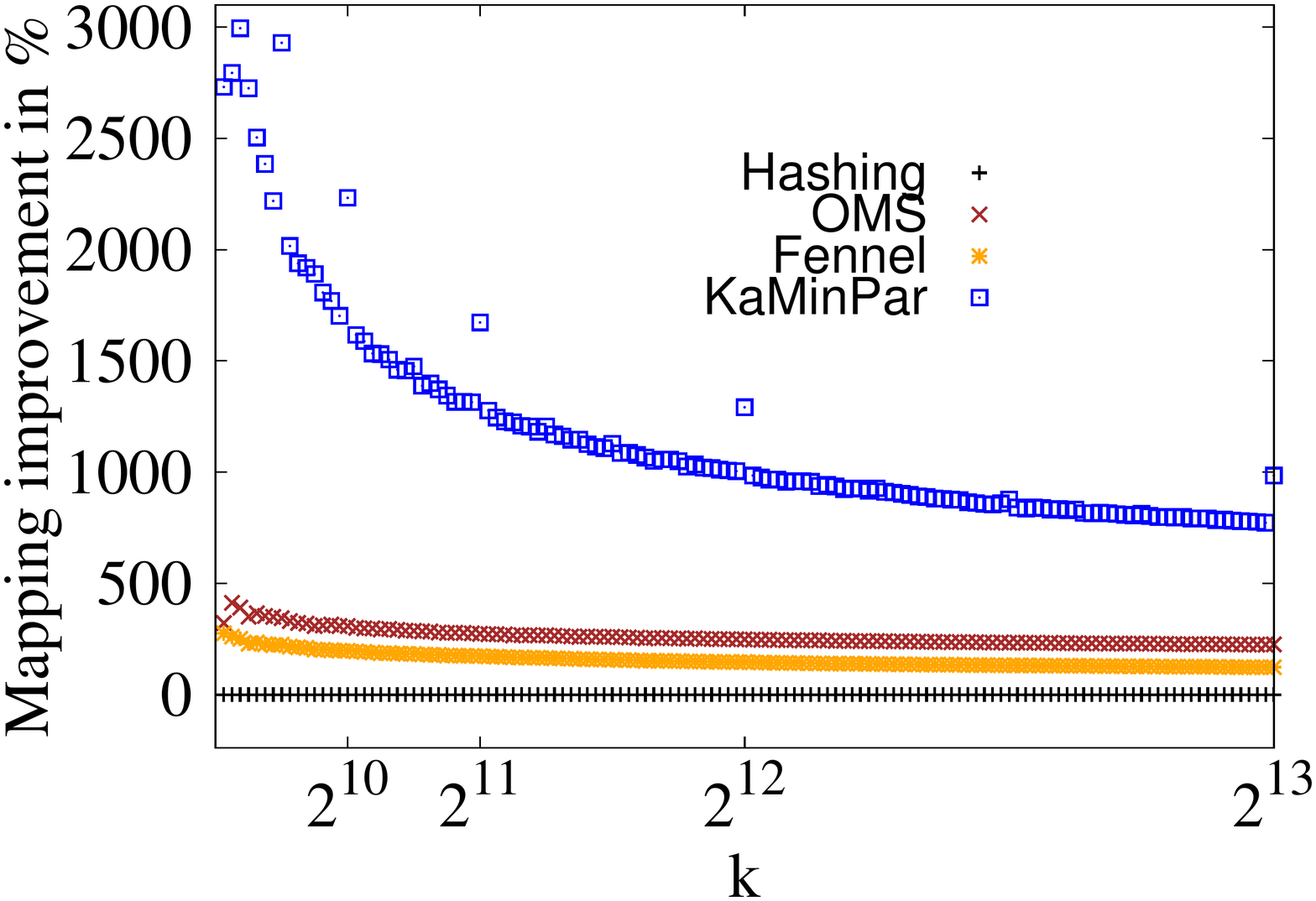}
		\vspace*{\capPositionSmall}
		\caption{Mapping improvement over Hashing.}
		\label{fig:stateoftheartMap_res}
	\end{subfigure}\hspace{2mm}%
	\begin{subfigure}[]{\scaleFactorSmall\textwidth}
		\centering
		\includegraphics[width=\imgScaleFactorSmall\textwidth]{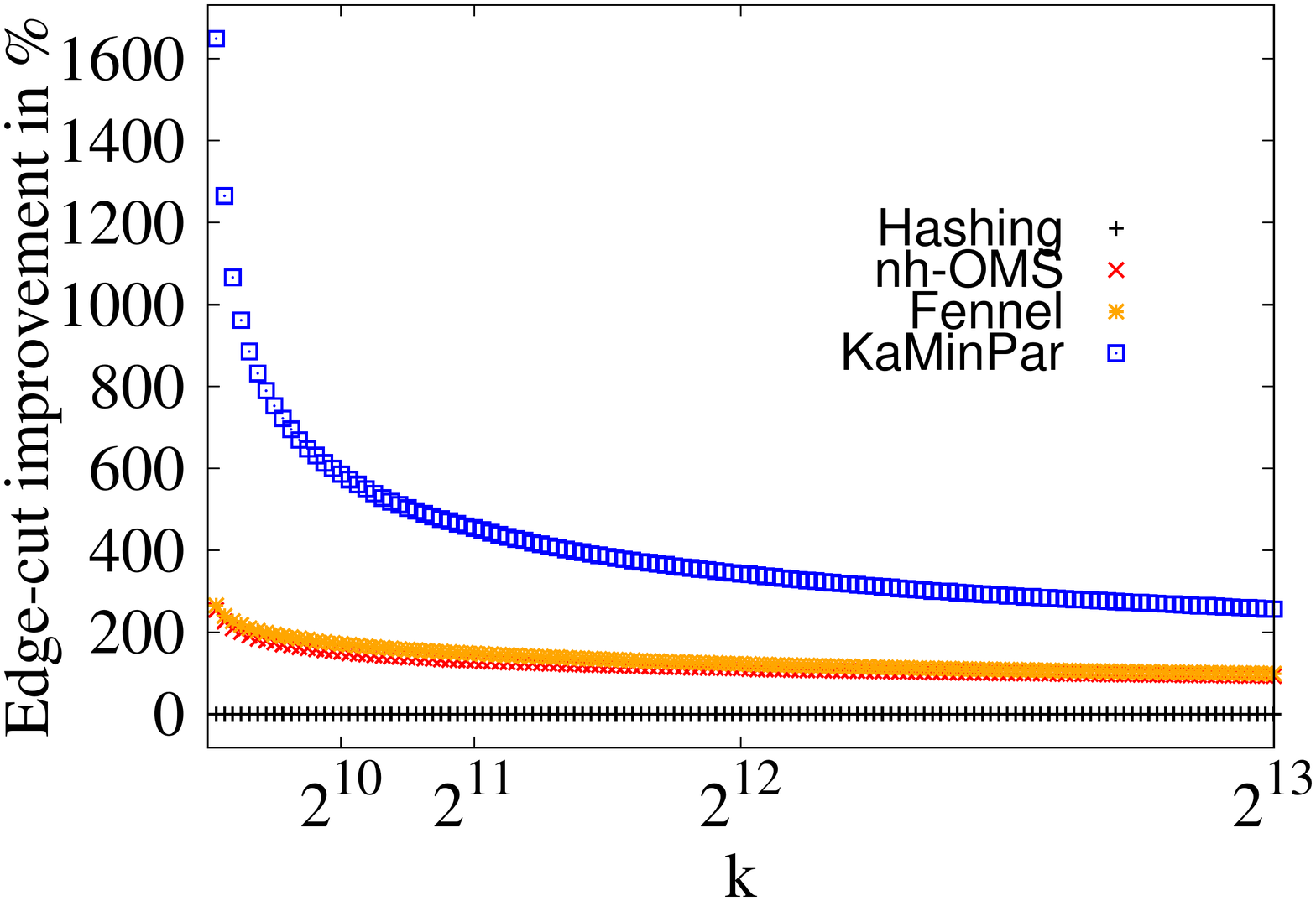}
		\vspace*{\capPositionSmall}
		\caption{Edge-cut improvement over Hashing.}
		\label{fig:stateoftheartPar_res}
	\end{subfigure}\hspace{2mm}%
	\begin{subfigure}[]{\scaleFactorSmall\textwidth}
		\centering
		\includegraphics[width=\imgScaleFactorSmall\textwidth]{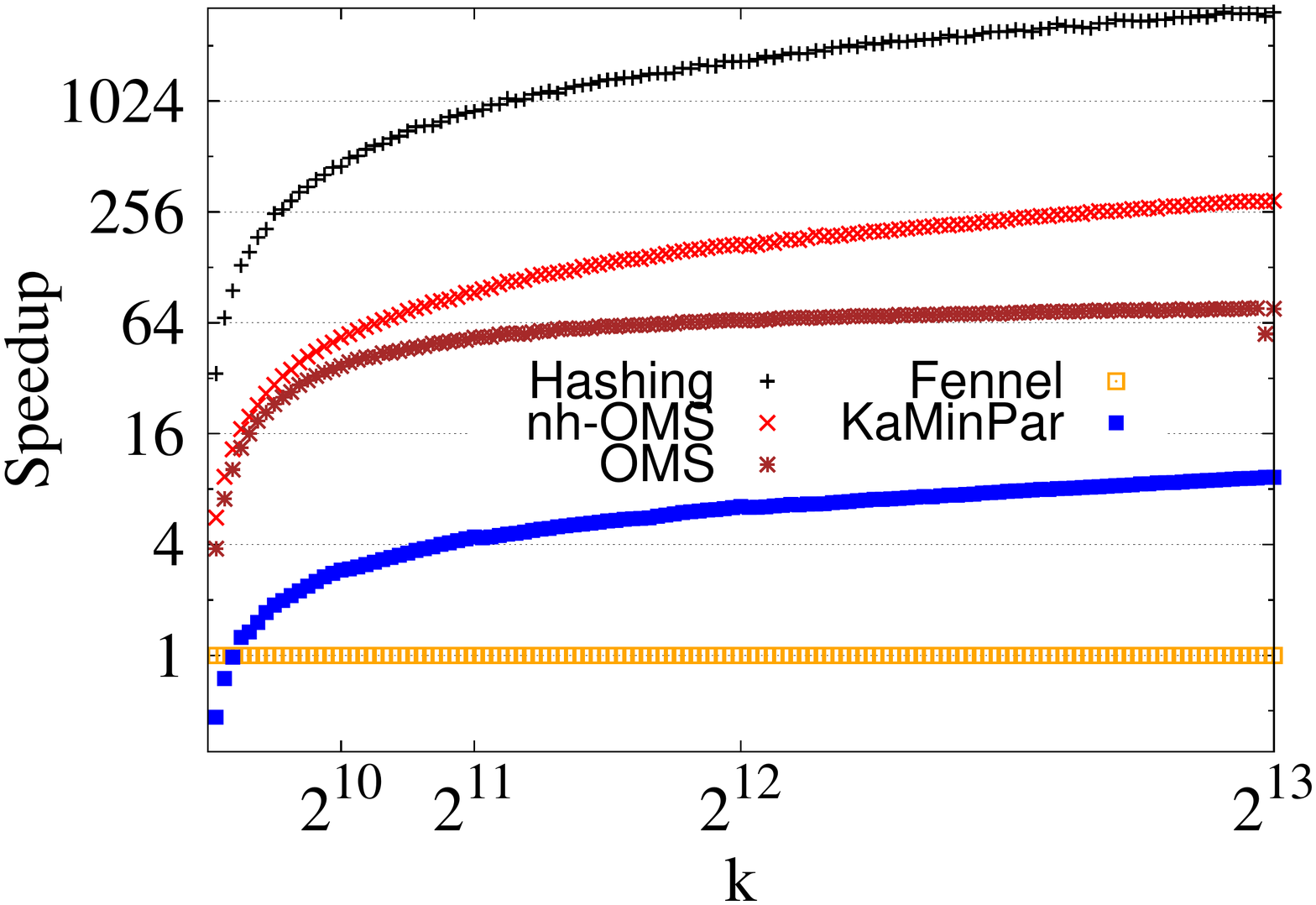}
		\vspace*{\capPositionSmall}
		\caption{Total speedup over Fennel.}
		\label{fig:stateoftheartPar_tim}
	\end{subfigure}
	\vspace*{\afterCapSmall}
	\begin{subfigure}[]{\scaleFactorSmall\textwidth}
		\centering
		\includegraphics[width=\imgScaleFactorSmall\textwidth]{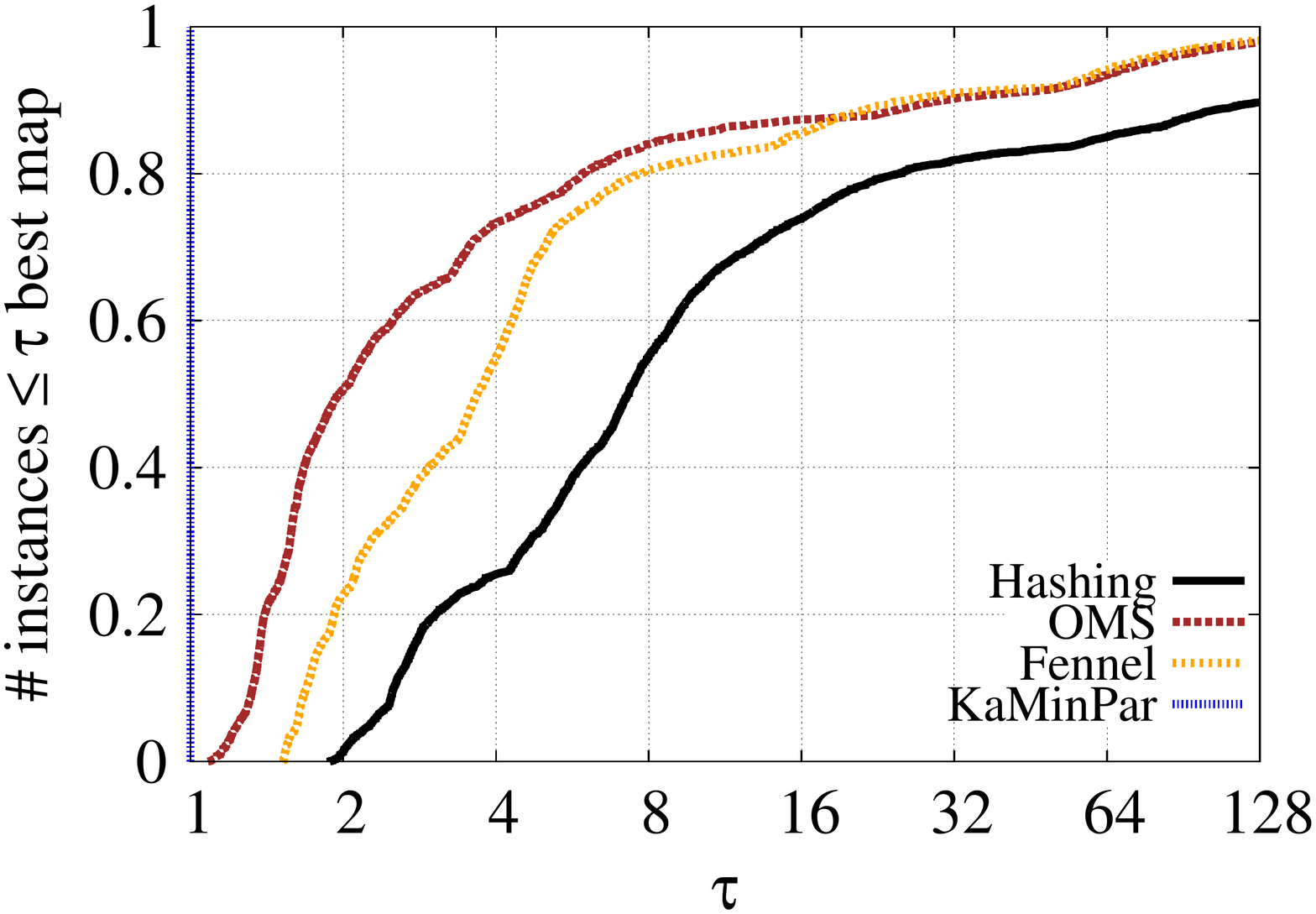}
		\vspace*{\capPositionSmall}
		\caption{Mapping performance profile.}
		\label{fig:stateoftheartMap_respp}
	\end{subfigure}\hspace{2mm}%
	\begin{subfigure}[]{\scaleFactorSmall\textwidth}
		\centering
		\includegraphics[width=\imgScaleFactorSmall\textwidth]{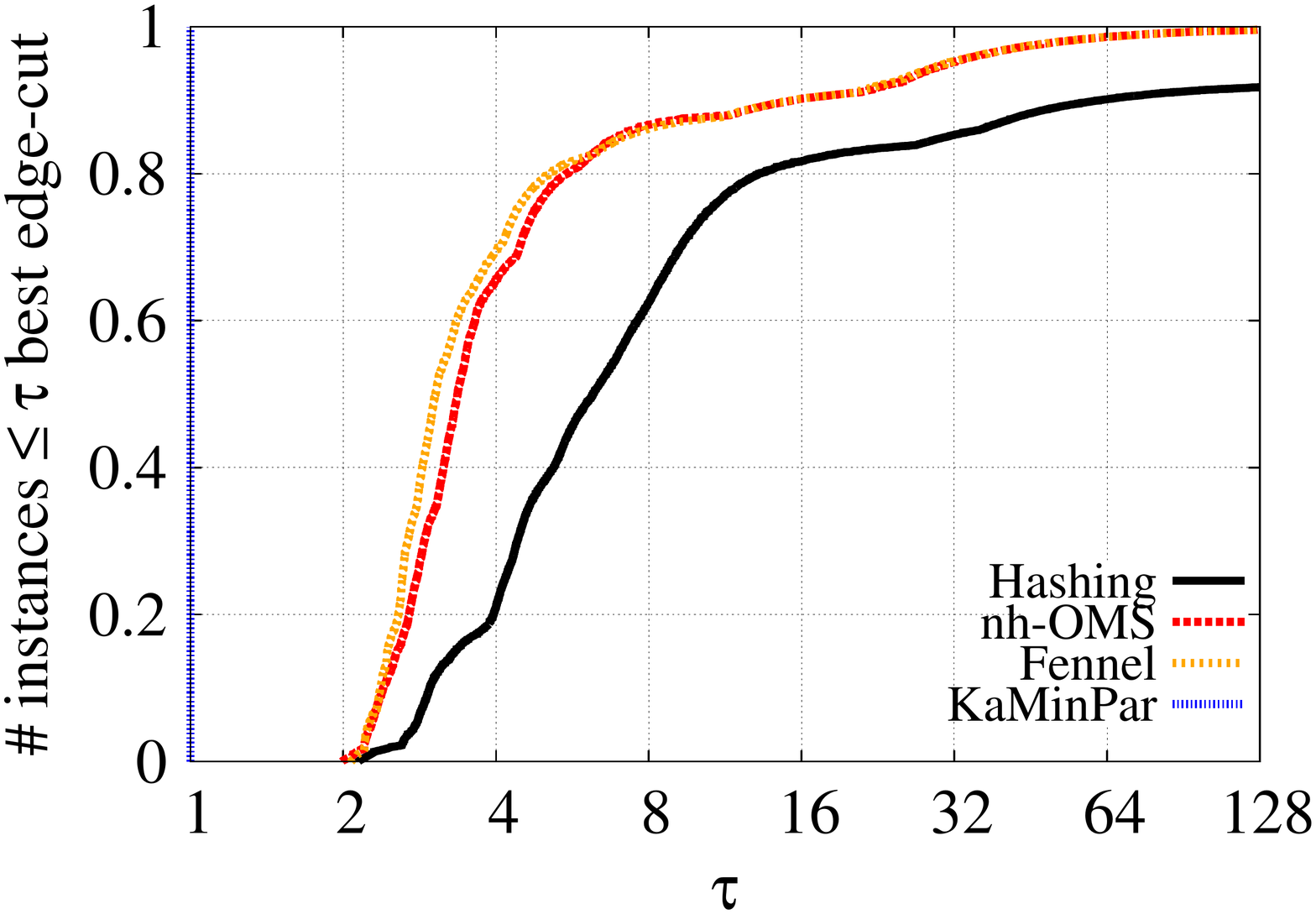}
		\vspace*{\capPositionSmall}
		\caption{Edge-cut performance profile.}
		\label{fig:stateoftheartPar_GPpp}
	\end{subfigure}\hspace{2mm}%
	\begin{subfigure}[]{\scaleFactorSmall\textwidth}
		\centering
		\includegraphics[width=\imgScaleFactorSmall\textwidth]{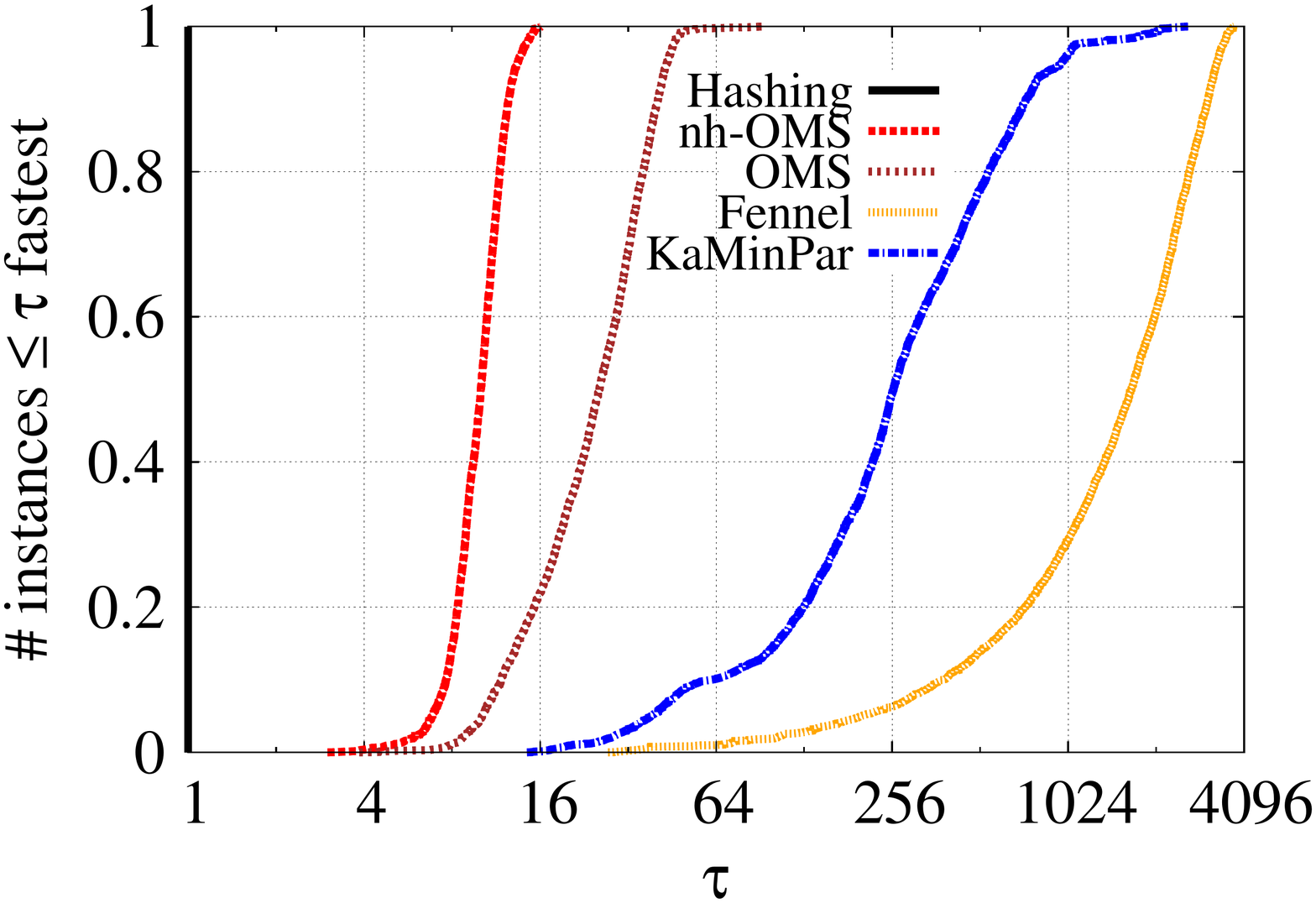}
		\vspace*{\capPositionSmall}
		\caption{Running time performance profile.}
		\label{fig:stateoftheartPar_timpp}
	\end{subfigure}

	\vspace*{.55cm}
	\caption{Comparison against the state-of-the-art. Higher is better.}
	\label{fig:state-of_the_art}
	
	\vspace*{-.5cm}
\end{figure*}

\vspace*{-.1cm}
\subsection{Scalability}
\label{subsec:Scalability}

\begin{table}[b!]
	\footnotesize
	\centering
	\setlength{\tabcolsep}{4pt}
	\begin{tabular}{|c|rr|rr|rr|rr|rr|}
		\hline
		\multirow{2}{*}{Threads} & \multicolumn{2}{c|}{Hashing} & \multicolumn{2}{c|}{nh-OMS} & \multicolumn{2}{c|}{OMS} & \multicolumn{2}{c|}{Fennel} & \multicolumn{2}{c|}{KaMinPar} \\
		
		& RT            & SU          & RT           & SU          & RT          & SU        & RT           & SU          & RT            & SU           \\
		\hline\hline
		1                        & 0.49          & 1.0         & 4.8          & 1.0         & 10.7        & 1.0       & 673.6        & 1.0         & 36.9          & 1.0          \\
		2                        & 0.72          & 0.7         & 4.6          & 1.1         & 6.4         & 1.7       & 346.3        & 1.9         & 19.3          & 1.9          \\
		4                        & 0.70          & 0.7         & 3.6          & 1.3         & 3.9         & 2.7       & 184.6        & 3.6         & 10.5          & 3.5          \\
		8                        & 0.72          & 0.7         & 3.0          & 1.6         & 2.6         & 4.1       & 96.3         & 7.0         & 5.8           & 6.4          \\
		16                       & 0.75          & 0.7         & 2.5          & 1.9         & 2.3         & 4.7       & 54.0           & 12.5        & 3.5           & 10.5         \\
		32                       & 0.46          & 1.1         & 1.7          & 2.8         & 1.3         & 8.2       & 44.2         & 15.2        & 3.1           & 11.9 \\\hline 
		
	\end{tabular}
        \vspace*{.25cm}
	\caption{Average running time in seconds~(RT) and average speedup~(SU) for $k=8192$.}
	\label{tab:scalability}
	
\end{table}
\begin{figure*}[t]
	\captionsetup[subfigure]{justification=centering}
	\centering
	\vspace*{-.75cm}
	\begin{subfigure}[t]{\scaleFactorSmall\textwidth}
		\centering
		\includegraphics[width=\imgScaleFactorSmall\textwidth]{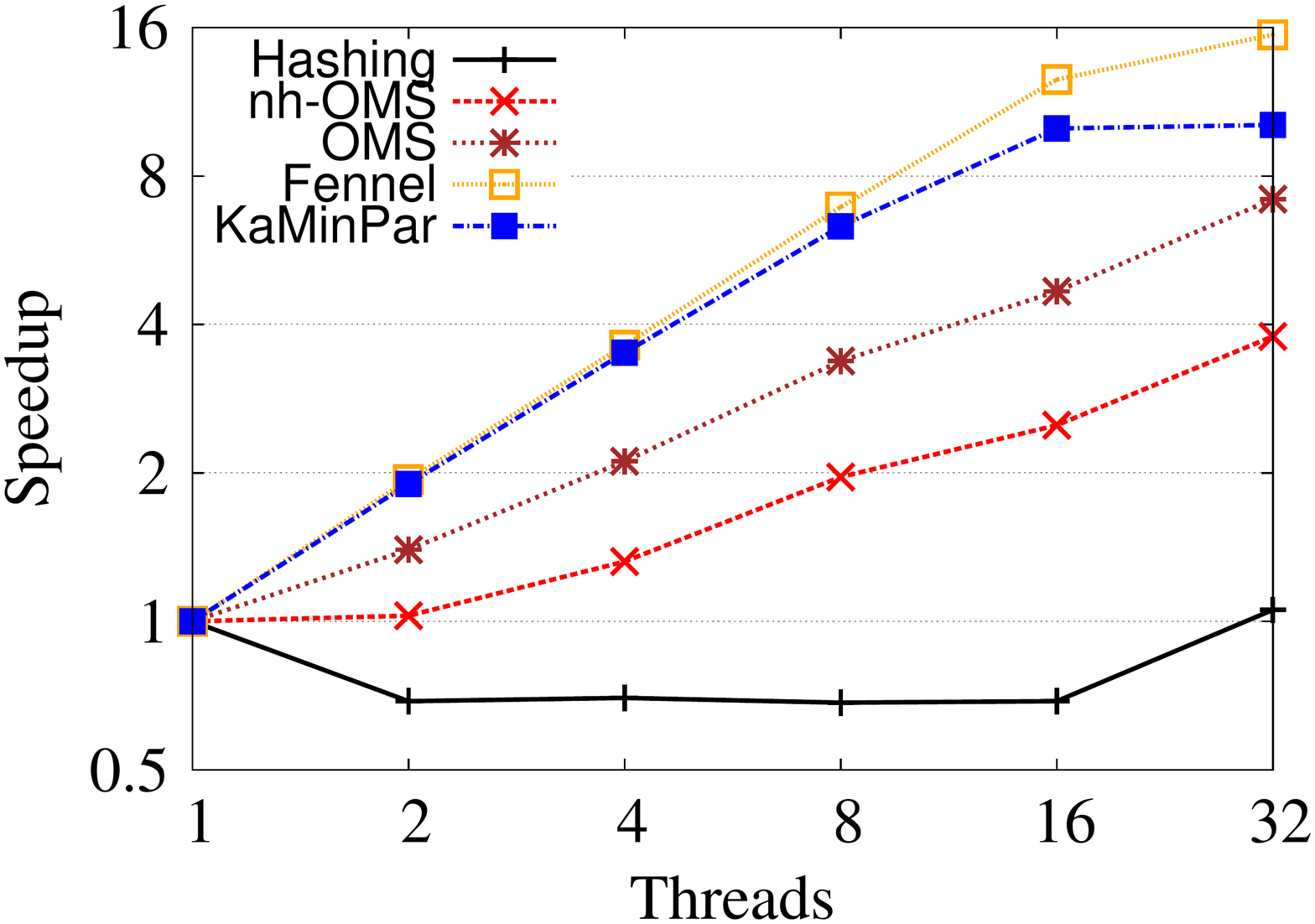}
		\vspace*{\capPositionSmall}
		\caption{Speedup versus number of used threads for graph soc-orkut-dir.}
		\label{fig:soc-orkut-dir_speedup}
	\end{subfigure}\hspace{2mm}%
	\begin{subfigure}[t]{\scaleFactorSmall\textwidth}
		\centering
		\includegraphics[width=\imgScaleFactorSmall\textwidth]{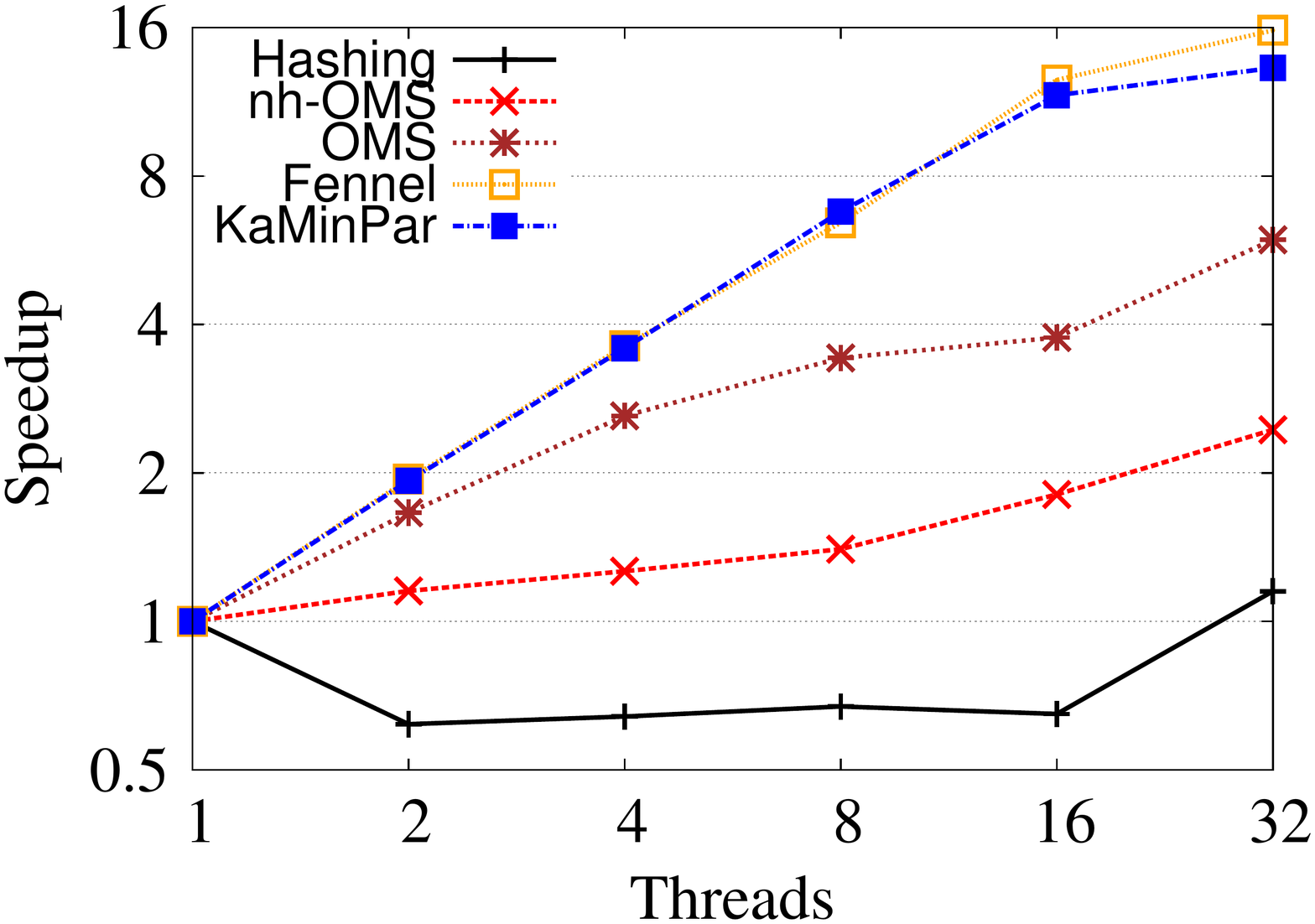}
		\vspace*{\capPositionSmall}
		\caption{Speedup versus number of used threads for graph HV15R.}
		\label{fig:HV15R_speedup}
	\end{subfigure}\hspace{2mm}%
	\begin{subfigure}[t]{\scaleFactorSmall\textwidth}
		\centering
		\includegraphics[width=\imgScaleFactorSmall\textwidth]{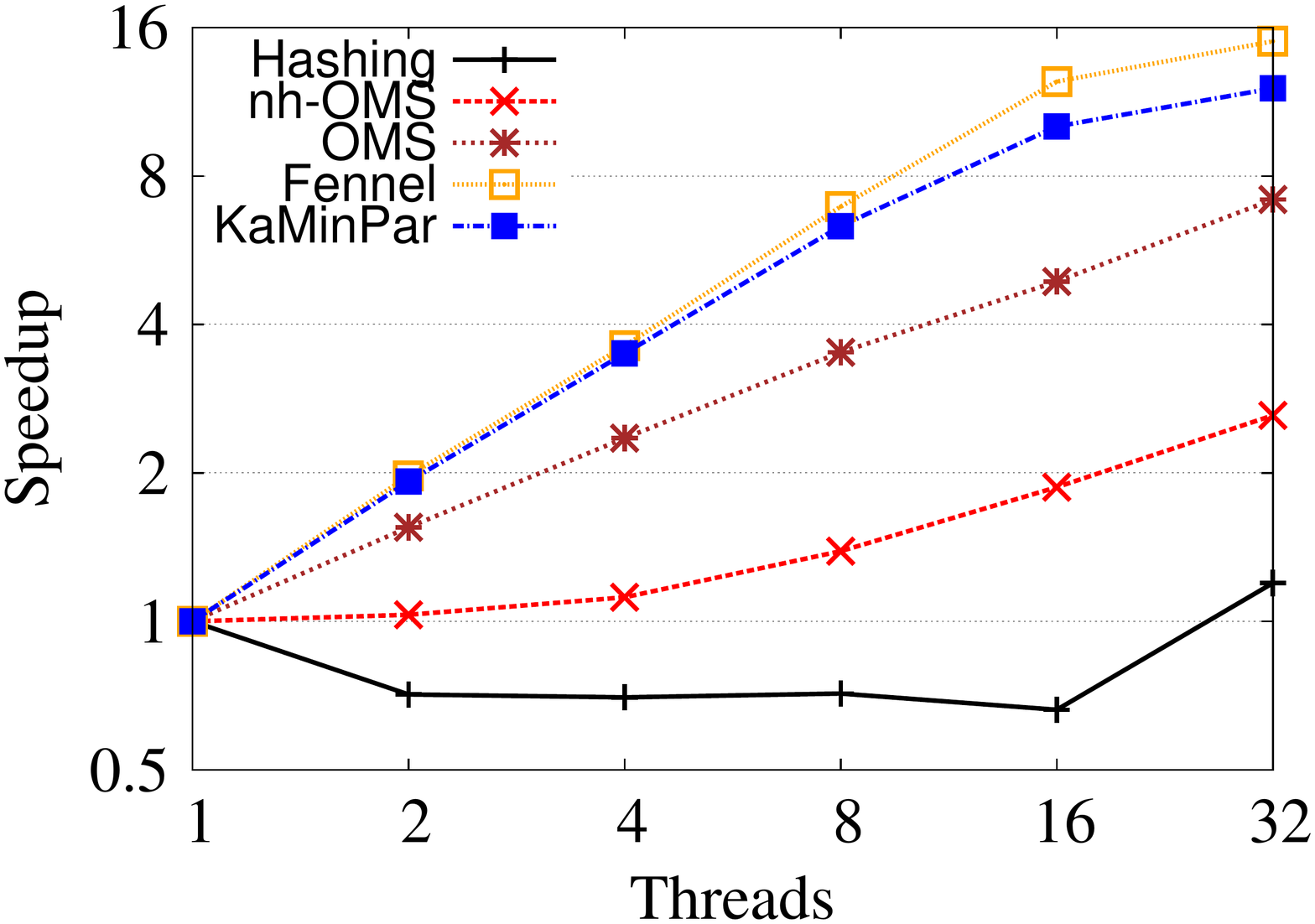}
		\vspace*{\capPositionSmall}
		\caption{Speedup versus number of used threads for graph soc-LiveJournal1.}
		\label{fig:soc-LiveJournal1_speedup}
	\end{subfigure}
	\vspace*{\afterCapSmall}
	\begin{subfigure}[t]{\scaleFactorSmall\textwidth}
		\centering
		\includegraphics[width=\imgScaleFactorSmall\textwidth]{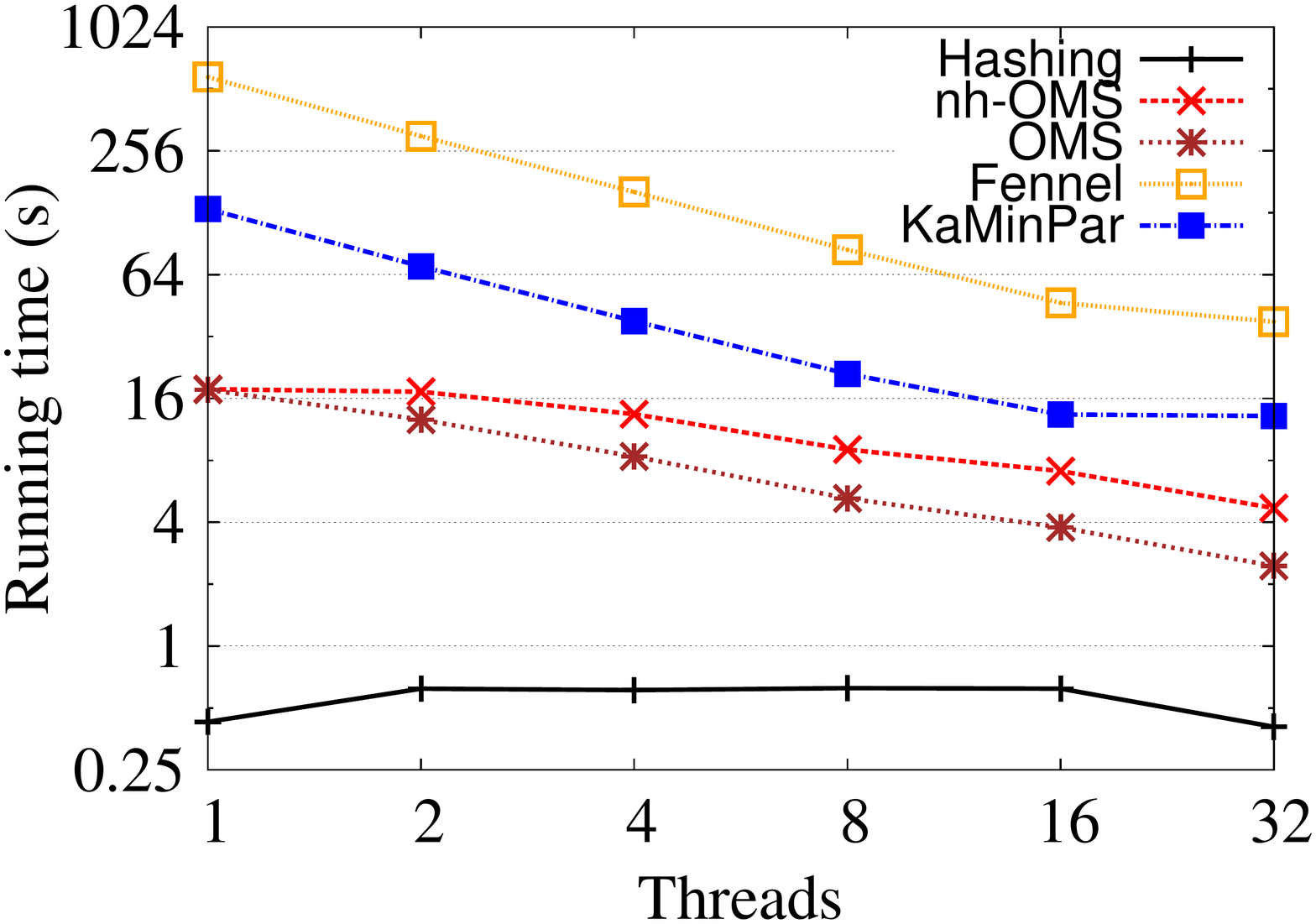}
		\vspace*{\capPositionSmall}
		\caption{Running time versus number of used threads for graph soc-orkut-dir.}
		\label{fig:soc-orkut-dir_times}
	\end{subfigure}\hspace{2mm}%
	\begin{subfigure}[t]{\scaleFactorSmall\textwidth}
		\centering
		\includegraphics[width=\imgScaleFactorSmall\textwidth]{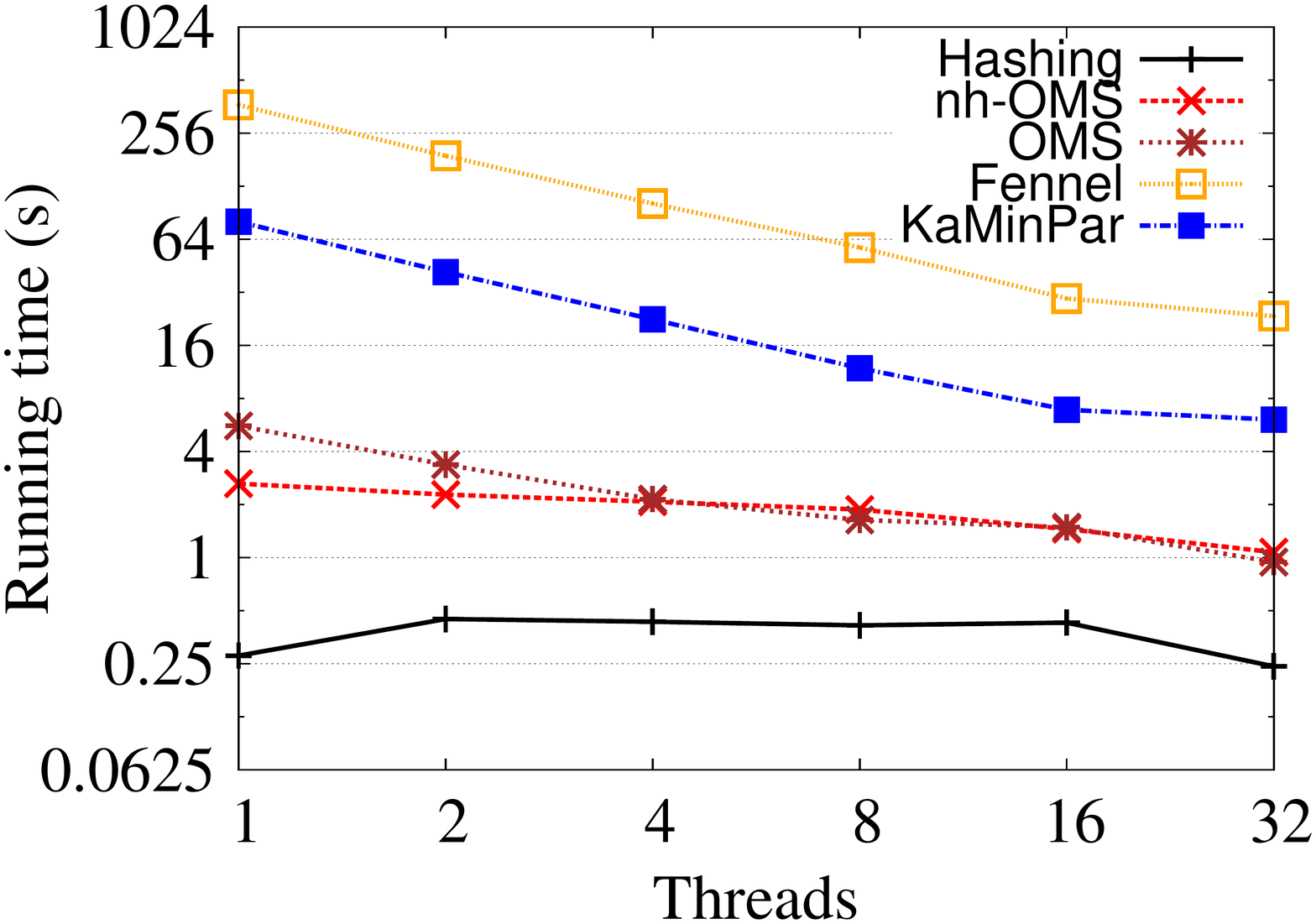}
		\vspace*{\capPositionSmall}
		\caption{Running time versus number of used threads for graph HV15R.}
		\label{fig:HV15R_times}
	\end{subfigure}\hspace{2mm}%
	\begin{subfigure}[t]{\scaleFactorSmall\textwidth}
		\centering
		\includegraphics[width=\imgScaleFactorSmall\textwidth]{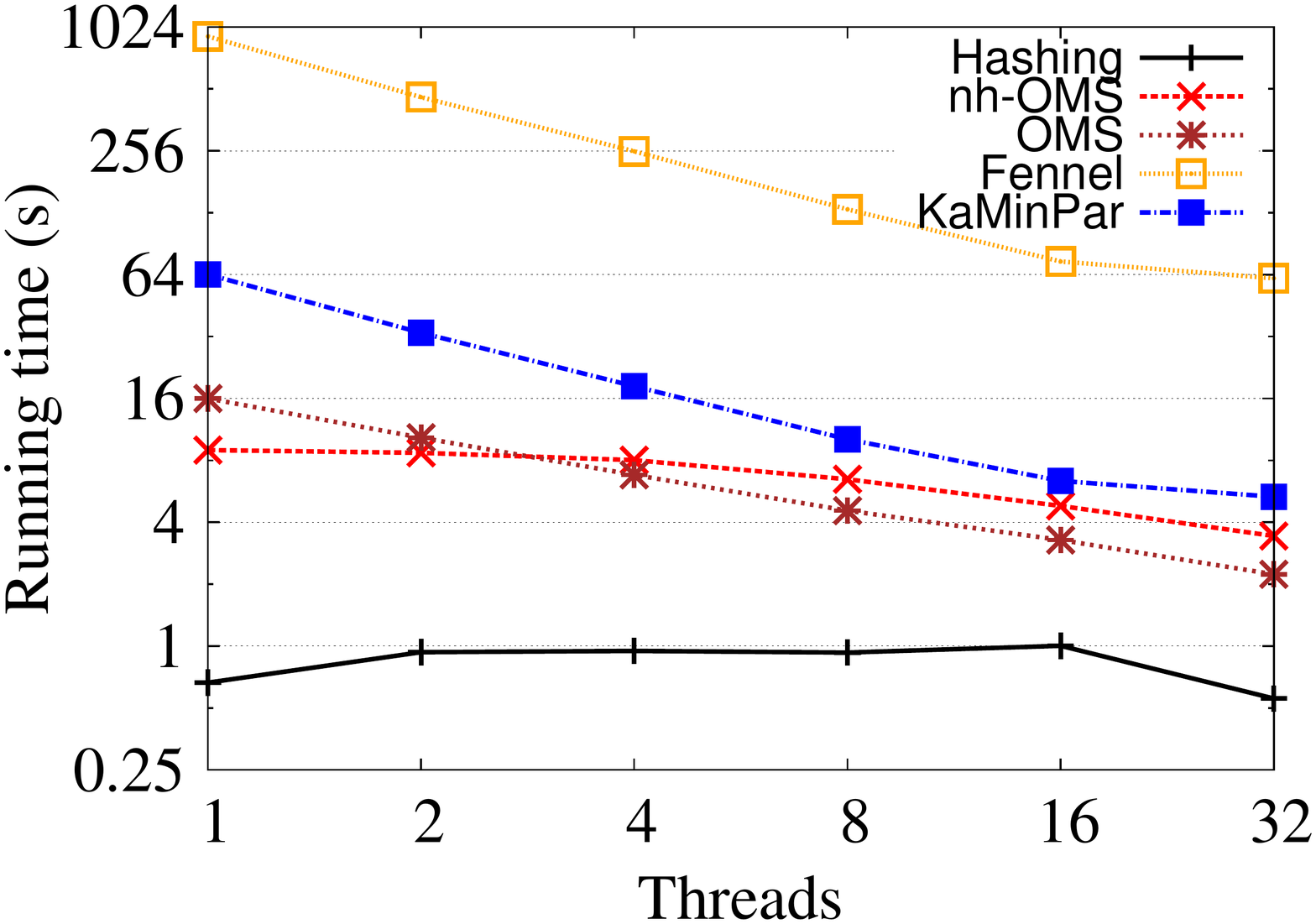}
		\vspace*{\capPositionSmall}
		\caption{Running time versus number of used threads for graph soc-LiveJournal1.}
		\label{fig:soc-LiveJournal1_times}
	\end{subfigure}

	\vspace*{.45cm}
	\caption{Speedup and time comparison for $k=8192$. Higher is better for speedup. Lower is better for time.}
	\label{fig:state-of_the_art_parallel}
	
	\vspace*{-.5cm}
\end{figure*}

Now we evaluate the scalability of our online recursive multi-section algorithm.
As in Section~\ref{subsec:State-of-the-Art}, we refer to our algorithm as~\emph{OMS} when a process mapping communication hierarchy is given and~\emph{nh-OMS} otherwise.
As competitors, we include KaMinPar, Fennel, and Hashing.
For a fair comparison against Fennel and Hashing, we implemented them with the same parallelization scheme of our algorithm, \ie a vertex-centric parallelization.
We do not include IntMap~\cite{fonseca2020better} in these experiments since it cannot run in parallel.
For these experiments, we selected the 12 graphs from the Test Set Table~\ref{tab:scalability} which have at least \numprint{2000000} nodes and partitioned them into $k=8192$ blocks using all algorithms.

In Figure~\ref{fig:state-of_the_art_parallel}, we plot speedup and running time versus number of threads for the graphs soc-orkut-dir, HV15R, and soc-LiveJournal1.
In Table~\ref{tab:scalability}, we plot the average running time in seconds and speedup over all graphs as a function of the number of threads.
For all graphs, Hashing presents the worst scalability, with speedups smaller than 1.
Although Hashing is theoretically an embarrassingly parallel algorithm, it has two limitations:
(i) it is extremely fast, hence the overhead of the parallelization is too large in comparison to the overall running time and
(ii) it neither reuses data nor accesses sequential positions in memory, so there are almost no cache hits.
On the other hand, Fennel presents the best scalability.
Differently than Hashing, it is rather slow but reuses data, \eg the assignments of previous nodes to blocks, and goes through the array of blocks in order to compute their score. 
Following Fennel, KaMinPar has the second best scalability which roughly reproduces the behavior reported in~\cite{gottesburen2021deep}.
The algorithms OMS and hr-OMS have an intermediary scalability between KaMinPar and Hashing.
This is explained by their characteristics, which are intermediary between Fennel and Hashing.
Note that OMS is more scalable than nh-OMS.
This happens because OMS goes through and scores several blocks in one of the partitioning subproblems contained in the multi-section hierarchy, which favors cache hits, whereas nh-OMS partitioning subproblems have at most $4$ blocks.
For 32 threads, the average running time of OMS is within a factor 3 of the running time of~Hashing.

\vspace*{-.5cm}
\section{Conclusion}
\vspace*{-.25cm}
\label{s:conclusion}

We proposed, analyzed, and engineered all the details of an online recursive multi-section algorithm to compute hierarchical partitionings of graphs. 
The complexity analysis shows that our algorithm is superior to current state-of-the-art one-pass streaming algorithms to partition graphs on the fly into a large number of blocks while also implicitly optimizing process mapping objectives.
To the best of our knowledge, this is the first streaming process mapping algorithm in literature.
We present extensive experimental results in which we tune our algorithm, explore its parameters, compare it against the previous state-of-the-art, and show that we can speed it up even further with a multi-threaded parallelization.
Experiments have shown that our algorithm is up to two orders of magnitude faster than the previous state of-the-art of streaming partitioning while producing solutions with better communication cost and slightly worse edge-cut.
On the other hand, our algorithm is only~$3$ times slower than Hashing when running on 32 threads while computing significantly better results.
While our algorithm is already useful for a wide-range of applications that need (hierarchical) partitions very fast, in future work, we plan to parallelize the algorithm in the distributed memory model and want to port it to GPUs.
Moreover, given the good results, we plan to publicly our algorithm release soon.

\bibliographystyle{splncs04}
\bibliography{phdthesiscs}

\end{document}